\theoremstyle{plain}
\newtheorem{theorem}{Theorem}
\newtheorem{lemma}[theorem]{Lemma}
\newtheorem{corollary}[theorem]{Corollary}
\newtheorem{proposition}[theorem]{Proposition}
\theoremstyle{definition}
\newtheorem{definition}[theorem]{Definition}
\newtheorem{remark}[theorem]{Remark}
\newtheorem{example}[theorem]{Example}
\newtheorem{problem}{Problem}
\DeclareMathOperator{\val}{val}
\DeclareMathOperator{\rep}{rep}
\DeclareMathOperator{\lcm}{lcm}
\title[A decision problem \ldots in non-standard numeration systems]
{A decision problem for ultimately periodic sets in non-standard numeration systems}
\author[J. Bell]{Jason Bell}
\author[E. Charlier]{Emilie Charlier}
\author[A. S. Fraenkel]{Aviezri S. Fraenkel}
\author[M. Rigo]{Michel Rigo}
\address[J. Bell]{Department of Mathematics, Simon Fraser University
 Burnaby, BC, CANADA V5A 1S6}
\email{jpb@sfu.ca}
\address[E. Charlier, M. Rigo]{Institute of Mathematics, University of Li\`ege, 
Grande Traverse 12 (B 37), B-4000 Li\`ege, Belgium}
\email{\{echarlier,M.Rigo\}@ulg.ac.be}
\address[A. S. Fraenkel]{
Department of Computer Science \& Applied Mathematics,  
Weizmann Institute of Science, 
76100 Rehovot, 
Israel.}
\email{aviezri.fraenkel@weizmann.ac.il}
\subjclass[2000]{Primary: 68Q45 Secondary: 11U05, 11B85, 11S85}
\begin{document}
\maketitle

\begin{abstract}
    Consider a non-standard numeration system like the one built over
    the Fibonacci sequence where nonnegative integers are represented
    by words over $\{0,1\}$ without two consecutive $1$. Given a set
    $X$ of integers such that the language of their greedy
    representations in this system is accepted by a finite automaton,
    we consider the problem of deciding whether or not $X$ is a finite
    union of arithmetic progressions. We obtain a decision procedure
    for this problem, under some hypothesis about the considered
    numeration system. In a second part, we obtain an analogous
    decision result for a particular class of abstract numeration
    systems built on an infinite regular language.
\end{abstract}

\section{Introduction}
\begin{definition}\label{def:numsys}
    A {\em positional numeration system} is given by a (strictly) increasing
    sequence $U=(U_i)_{i\ge 0}$ of integers such that $U_0=1$ and
    $C_U:=\sup_{i\ge 0} \lceil U_{i+1}/U_i\rceil$ is finite. Let
    $A_U=\{0,\ldots,C_U-1\}$. The {\em greedy $U$-representation} of a
    positive integer $n$ is the unique finite word
    $\rep_U(n)=w_\ell\cdots w_0$ over $A_U$ satisfying
$$n=\sum_{i=0}^\ell w_i\, U_i,\ w_\ell\neq 0 \text{ and } \sum_{i=0}^t w_i\,
U_i<U_{t+1},\ \forall t=0,\ldots,\ell.$$ We set $\rep_U(0)$ to be the
empty word $\varepsilon$. A set $X\subseteq\mathbb{N}$ of integers is
{\it $U$-recognizable} if the language $\rep_U(X)$ over $A_U$ is
regular (i.e., accepted by a deterministic finite automaton, DFA). If $x=x_\ell\cdots
x_0$ is a word over a finite alphabet of 
integers, then
the {\it $U$-numerical value} of $x$ is $$\val_U(x)=\sum_{i=0}^\ell
x_i\, U_i.$$
\end{definition}

\begin{remark}
    Let $x,y$ be two words over $A_U$. As a consequence of the greediness of the representation, if $xy$
    is a greedy $U$-representation and if the leftmost letter of $y$ is
    not $0$, then $y$ is also a greedy $U$-representation. Notice that
    for $m,n\in\mathbb{N}$, we have $m<n$ if and only if
    $\rep_U(m)<_{gen}\rep_U(n)$ where $<_{gen}$ is the genealogical
    ordering over $A_U^*$: words are ordered by increasing length and
    for words of same length, one uses the lexicographical ordering
    induced by the natural ordering of the digits in the alphabet
    $A_U$. Recall that for two words $x,y\in A_U^*$ of same length,
    $x$ is lexicographically smaller than $y$ if there exist
    $w,x',y'\in A_U^*$ and $a,b\in A_U$ such that $x=wax'$,
    $y=wby'$ and $a<b$.
\end{remark}

For a positional numeration system $U$, it is natural to expect that
$\mathbb{N}$ is $U$-recognizable. A necessary condition is that the
sequence $U$ satisfies a linear recurrence relation \cite{Sha}.

\begin{definition}
    A positional numeration system $U=(U_i)_{i\ge 0}$ is said to be
    {\em linear}, if the sequence $U$ satisfies a homogenous linear
    recurrence relation with integer coefficients. Otherwise stated,
    there exist $k\ge 1$ and constant coefficients $a_1,\ldots,a_k$
    such that for all $i\ge 0$, we have
    \begin{equation}
        \label{eq:linrec}
        U_{i+k}=a_1 U_{i+k-1}+\cdots +a_k U_i,\quad\text{with } a_1,\ldots,a_k\in\mathbb{Z},\ a_k\neq 0.
    \end{equation}
    We say that $k$ is the {\em order} of the recurrence relation.
\end{definition}

\begin{example}
    Consider the sequence defined by $F_0=1$, $F_1=2$ and for all
    $i\ge 0$, $F_{i+2}=F_{i+1}+F_{i}$. The {\it Fibonacci (linear
      numeration) system} is given by $F=(F_i)_{i\ge
      0}=(1,2,3,5,8,13,\ldots)$.  For instance, $\rep_F(15)=100010$
    and $\val_F(101001)=13+5+1=19$.
\end{example}

In this paper, we mainly address the following decidability question
and its extension to abstract numeration systems.
\begin{problem}\label{pb}
    Given a linear numeration system $U$ and a set
    $X\subseteq\mathbb{N}$ such that $\rep_U(\mathbb{N})$ and
    $\rep_U(X)$ are both recognized by (deterministic) finite
    automata. Is it decidable whether or not $X$ is ultimately
    periodic, i.e., whether or not $X$ is a finite union of arithmetic
    progressions ?
\end{problem}

Notice that the regularity of $\rep_U(\mathbb{N})$ ensures that there
exists a set $X\subseteq\mathbb{N}$ such that $\rep_U(X)$ is
regular, see also Remark~\ref{rem:nurec}.\smallskip

Ultimately periodic sets of integers play a special role.  On the one
hand such infinite sets are coded by a finite amount of
information. On the other hand the celebrated Cobham's theorem asserts
that these sets are the only sets that are recognizable in all integer
base systems \cite{Cob}. This is the reason why they are also referred
in the literature as {\it recognizable} sets of integers (the
recognizability being in that case independent of the base). Moreover,
Cobham's theorem has been extended to various situations and in
particular, to numeration systems given by substitutions \cite{Dur}.
\smallskip

If we restrict ourselves to the usual integer base $b\ge 2$ numeration
system, referred to in the sequel as {\em $b$-ary system}, defined by
$U_i=b\, U_{i-1}$ for $i\ge 1$ and $U_0=1$, several results are known.
J.~Honkala showed in \cite{Hon} that Problem~\ref{pb} turns out to be
decidable. Let us also mention \cite{min}, where the number of states
of the minimal automaton accepting numbers written in base $b$ and
divisible by $d$ is given explicitly. J.-P.~Allouche and J.~Shallit
ask in \cite{AS} if one can obtain a polynomial time decision
procedure for $b$-ary systems. Using the logic formalism of the
Presburger arithmetic, a positive answer to this question is given by
J.~Leroux in \cite{Leroux} even when considering subsets of
$\mathbb{Z}^d$, $d\ge 1$.  In dimension one, ultimately periodic sets
are exactly the sets definable in the Presburger arithmetic
$\langle\mathbb{N},+\rangle$.  \smallskip

Note that A.~Muchnik has shown that Problem~\ref{pb} turns out to be
decidable for any linear numeration system $U$ for which both
$\rep_U(\mathbb{N})$ and addition are recognizable by automata
\cite{much}. But it is a difficult question to characterize numeration
systems $U$ for which addition is {\em computable by finite
  automaton}, i.e., for which the language
$$
\biggl\{\begin{pmatrix}
    0^{m-|\rep_U(x)|}\rep_U(x)\\
    0^{m-|\rep_U(y)|}\rep_U(y)\\
    0^{m-|\rep_U(z)|}\rep_U(z)\\
\end{pmatrix}
\, \mid \, x,y,z\in\mathbb{N},\ x+y=z,\ 
m=\max_{t\in\{x,y,z\}}|\rep_U(t)|\biggr\} $$
where leading zeroes are
prefixed for padding the shorter components to obtain three words of
the same length, is regular (see for instance \cite{BH,Fro} in which
the characteristic polynomial of the sequence $U$ is the minimal
polynomial of a Pisot number). In \cite{FroSeq}, the sequentiality of
the successor function (i.e., the action of adding $1$) is studied. If
addition is computable by a finite automaton, so is the successor
function, but the converse does not hold in general. In particular,
some examples of linear numeration systems for which addition is not
computable by a finite automaton are given in \cite{FroSeq}: for
instance, the sequence defined by $U_i=3U_{i-1}+2U_{i-2}+3U_{i-4}$
with integer initial conditions $1=U_0<U_1<U_2<U_3$. So the decision
techniques from \cite{Leroux,much} cannot be applied to that system.
Nevertheless, as we will see in Example~\ref{exa:aaa}, our decision
procedure can be applied to this system. Notice also that in the
extended framework of abstract numeration systems, one can exhibit
systems such that multiplication by a constant does not preserve
recognizability \cite{CRS,LR,R}.  Therefore the powerful tools from
logic discussed above cannot be applied in that context either.
\smallskip

The question studied in this paper was raised by
J.~Sakarovitch during the ``{\it Journ\'ees de Num\'eration}'' in
Graz, May 2007. The question was initially asked for a larger class of
systems than the one treated here, namely for any abstract numeration
systems defined on an infinite regular language \cite{LR}. A shorter
and partial version of this paper has been presented to the 33rd MFCS
symposium in Toru\'n, August 2008 \cite{CR}. \smallskip

Observe that this decision problem for all abstract numeration systems
is equivalent to the famous {\it HD0L periodicity problem}: given a
morphism $f$ and a coding $g$, decide if the infinite word
$g(f^\omega(a))$ is ultimately periodic, see \cite{HR,RM}. For the
restricted case of the {\it D0L periodicity problem}, where only the
morphism $f$ is considered, decision procedures are well-known
\cite{HL,Pansiot}.  Finally, questions connected to the ones addressed
here have independently and recently gained interest \cite{ARS}. In
particular, a simple proof of Honkala's original result based on the
construction of some automata is given in that paper. As for the
logical approach considered by Muchnik and Leroux, the arguments given
in \cite{ARS} rely on the recognizability of addition by automata
(which can be done for the classical $k$-ary system but not
necessarily for an arbitrary linear numeration system).  \smallskip

The structure of this paper is the same as \cite{Hon}. First we give
an upper bound on the admissible periods of a $U$-recognizable set $X$
when it is assumed to be ultimately periodic. Then an upper bound on
the admissible preperiods is obtained. These bounds depend essentially
on the number of states of the (minimal) automaton recognizing
$\rep_U(X)$. Finally, finitely many such periods and preperiods have
to be checked. For each of them, we have to build an automaton
accepting the corresponding ultimately periodic sets (this implies
that $\mathbb{N}$ has to be recognizable).  

Though the structure is the same, our arguments and techniques are
quite different from \cite{Hon}. They rely on the study of the
quantity $N_U(m)$ defined as the number of residue classes that appear
infinitely often in the sequence $(U_i\mod m)_{i\ge 0}$. Our main
result --- Theorem~\ref{the:finfin} --- can be stated as follows. 
\smallskip

\noindent
{\bf Theorem.} Let $U=(U_i)_{i\ge 0}$ be a linear numeration system
such that $\mathbb{N}$ is $U$-recognizable and satisfying
$\lim_{i\to+\infty} U_{i+1}-U_i=+\infty$. If
$\lim_{m\to+\infty}N_U(m)=+\infty$, then Problem~\ref{pb} is
decidable.  \smallskip

Actually our techniques cannot be applied to $b$-ary 
systems (see Remark~\ref{rem:notint}), which is the case treated by Honkala
\cite{Hon}, because in that case
$N_U(m)\not\to+\infty$ as $m\to+\infty$.

In Section~\ref{sec:lrrc}, we give a characterization of the linear
numeration systems $U$ such that $\lim_{m\to+\infty}N_U(m)=+\infty$.
To do so we use $p$-adic methods leading to a study of the
sequence $(U_i\mod p^v)_{i\ge 0}$ for all $v\ge 1$, where $p$ is a
prime dividing $a_k$.

In the last section, we consider again the same decision problem but
restated in the framework of abstract numeration systems \cite{LR}.
The definition of these systems is given in Section~\ref{sec:last}. We
apply successfully the same kind of techniques to a large class of
abstract numeration systems (for instance, an example consisting of
two copies of the Fibonacci system is considered).  The corresponding
decision procedure is given by Theorem~\ref{the:ans}. As explained
above, this result provides a decision procedure for specific
instances of the HD0L periodicity problem.

All along the paper, we try whenever it is possible to state results
in their most general form, even if later on we have to restrict
ourselves to particular cases. For instance, results about the
admissible preperiods do not require any particular assumption on the
numeration system except linearity.

\section{A Decision Procedure for a Class of Linear Numeration Systems}

We will often consider positional numeration systems $U=(U_i)_{i\ge 0}$
satisfying the following condition: \begin{equation}
        \label{eq:condlim}
        \lim_{i\to+\infty} U_{i+1}-U_i=+\infty.
    \end{equation}
    Notice that it is a weak requirement. Usually, the sequence $U$
    has an exponential growth, $U_i\simeq \beta^i$ for some $\beta>1$,
    and therefore \eqref{eq:condlim} is trivially satisfied. (It is
    for instance the case for the numeration systems considered in
    Remark~\ref{rem:exxx}.)

\begin{lemma}\label{lem:condlim}
    Let $U=(U_i)_{i\ge 0}$ be a positional numeration system
    satisfying \eqref{eq:condlim}.  Then for all $j$, there exists
    $L$ such that for all $\ell\ge L$,
$$1 0^{\ell-|\rep_U(t)|}\rep_U(t),\ t=0,\ldots,U_j-1$$
are greedy $U$-representations. Otherwise stated, if $w$ is a greedy $U$-representation, 
then for $r$ large enough, $10^rw$ is also a greedy $U$-representation.
\end{lemma}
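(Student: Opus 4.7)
The plan is to reduce the lemma to a direct verification of the greedy inequalities from Definition~\ref{def:numsys}, where the only non-trivial constraint turns out to be the top-most one, which is then controlled by hypothesis~\eqref{eq:condlim}.

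First I would fix $j$, pick an arbitrary $t\in\{0,\ldots,U_j-1\}$, and write its greedy representation as $\rep_U(t)=x_{s-1}\cdots x_0$ with $s=|\rep_U(t)|\le j$. For $\ell\ge j$ the candidate word
\[
w=w_\ell w_{\ell-1}\cdots w_0 := 1\,0^{\ell-s}\,x_{s-1}\cdots x_0
\]
has $w_\ell=1$, $w_{\ell-1}=\cdots=w_s=0$, and $\val_U(w)=U_\ell+t$. To check greediness, I inspect the partial sums $S_r=\sum_{i=0}^r w_i U_i$ for $r=0,\ldots,\ell$. For $r\le s-1$ one has $S_r=\sum_{i=0}^r x_i U_i<U_{r+1}$ because $\rep_U(t)$ itself is greedy. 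For $s\le r\le \ell-1$ one has $S_r=t<U_s\le U_{r+1}$, again by the greediness of $\rep_U(t)$. Hence all inequalities below the top are free.

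The only remaining inequality is the one at $r=\ell$, namely $U_\ell+t<U_{\ell+1}$, i.e.\ $t<U_{\ell+1}-U_\ell$. Since $t\le U_j-1$, it is sufficient to secure $U_{\ell+1}-U_\ell\ge U_j$. This is precisely where hypothesis~\eqref{eq:condlim} enters: because $U_{i+1}-U_i\to+\infty$, there exists $L\ge j$ such that $U_{\ell+1}-U_\ell\ge U_j$ for every $\ell\ge L$. For such $\ell$, the condition holds simultaneously for all $t=0,\ldots,U_j-1$, proving the first assertion. The second assertion is then immediate: given a greedy word $w$, apply the first part to $j:=|w|$ and to the integer $t:=\val_U(w)<U_{|w|}$, since then $10^r w$ is exactly the word $1\,0^{(|w|+r)-|\rep_U(t)|}\rep_U(t)$ with $\ell=|w|+r$, which is greedy once $\ell\ge L$.

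There is essentially no hard step here; the only point that requires care is the bookkeeping of the partial sums $S_r$ across the three regimes (inside $\rep_U(t)$, inside the block of zeros, and at the leading $1$). The qualitative content is that hypothesis~\eqref{eq:condlim} guarantees arbitrarily large gaps $U_{\ell+1}-U_\ell$, which is exactly the room one needs in order to prepend a $1$ followed by many zeros to any fixed greedy suffix without violating greediness.
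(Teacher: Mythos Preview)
Your proof is correct and follows essentially the same approach as the paper: both identify that the only non-trivial greedy inequality is the top one, $U_\ell+t<U_{\ell+1}$, and both invoke hypothesis~\eqref{eq:condlim} to obtain $L$ with $U_{\ell+1}-U_\ell\ge U_j$ for all $\ell\ge L$. The paper's version is terser---it argues only for the extremal value $t=U_j-1$ and then asserts that ``the conclusion follows''---whereas you spell out the three regimes of partial sums explicitly for every $t$, which is arguably cleaner.
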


\begin{proof}
    Notice that $\rep_U(U_j-1)$ is the greatest word of length $j$ in
    $\rep_U(\mathbb{N})$, since $\rep_U(U_j)=10^j$. 
    By hypothesis, there exists $L$ such that for all $\ell\ge L,$ 
    $U_{\ell+1}-U_\ell> U_j-1$. 
    Therefore, for all $\ell\ge L,$
    $$10^{\ell-j} \rep_U(U_j-1)$$ is the greedy
    $U$-representation of $U_\ell+U_j-1<U_{\ell+1}$ and the conclusion follows.
\end{proof}

\begin{example}
    Consider the positional numeration system $U_0=1$, $U_1=2$, $U_2=3$ and
    $U_{3i+r}=3^{i+1}+r$ for all $i\ge 1$ and $r\in\{0,1,2\}$. This
    system does not satisfy \eqref{eq:condlim} because $U_{i+1}-U_i=1$
    for infinitely many $i$'s. We have $\rep_U(2)=10$, but one can
    notice that for $i\ge 0$, $10^{3i+1}10$ is not a greedy
    $U$-representation. Indeed,
    $\val_U(10^{3i+1}10)=U_{3(i+1)}+2=U_{3i+5}$ has $10^{3i+5}$
    as greedy $U$-representation.
\end{example}

\begin{remark}
    In the above lemma, one cannot exchange the order of the
    quantifiers about $j$ and $L$. As an example, consider the
    sequence $U_i=(i+1)(i+2)/2$ for all $i\ge 0$.  This sequence
    satisfies the linear recurrence relation
    $U_{i+3}=3U_{i+2}-3U_{i+1}+U_i$ and also \eqref{eq:condlim}.
    Observe that, for all $i\ge 1$, $k=U_i-1$ is the unique value such
    that $U_i=U_k-U_{k-1}$ because for all $j\ge 1$,
    $U_j-U_{j-1}=j+1$. For all $i\ge 1$, $10^i$ is a greedy
    $U$-representation and the greedy $U$-representations of
    the form $10^n10^i$ are exactly those for which $n\ge U_i-i-1$.
\end{remark}

\begin{remark} Bertrand numeration systems associated with a real
    number $\beta>1$ are defined as follows. Let
    $A_\beta=\{0,\ldots,\lceil\beta\rceil-1\}$. Any $x\in[0,1]$ can be written as 
$$x=\sum_{i=1}^{+\infty} c_i\, \beta^{-i}, \text{ with }c_i\in A_\beta$$ 
and the sequence $(c_i)_{i\ge 1}$ is said to be a {\it
  $\beta$-representation} of $x$. The maximal $\beta$-representation
of $x$ for the lexicographical order is denoted $d_\beta(x)$ and is
called the {\it $\beta$-development} of $x$ (for details see
\cite[Chap. 8]{Lot}). We say that a $\beta$-development $(c_i)_{i\ge
  1}$ is {\it finite} if there exists $N$ such that $c_i=0$ for all
$i\ge N$.  If there exists $m\ge 1$ such that $d_\beta(1)=t_1\cdots t_m$ 
with $t_m\neq 0$, we set
$d_\beta^*(1):=(t_1\cdots t_{m-1}(t_m-1))^\omega$, otherwise
$d_\beta(1)$ is infinite and we set $d_\beta^*(1):=d_\beta(1)$.

We can now define a positional numeration system $U_\beta=(U_i)_{i\ge 0}$
associated with $\beta$ (see \cite{AB}). If $d_\beta^*(1)=(t_i)_{i\ge
  1}$, then
\begin{equation}
    \label{eq:betaa}
U_0=1\text{ and }\forall i\ge 1,\ U_i=t_1U_{i-1}+\cdots +t_iU_0+1.    
\end{equation}
If $\beta$ is a Parry number (i.e., $d_\beta(1)$ is finite or
ultimately periodic) then one can derive from \eqref{eq:betaa} that
the sequence $U_\beta$ satisfies a linear recurrence relation and as a
consequence of Bertrand's theorem \cite{AB} linking greedy
$U_\beta$-representations and finite factors occurring in
$\beta$-developments, the language $\rep_{U_\beta}(\mathbb{N})$ of the
greedy $U_\beta$-representations is regular. The automaton accepting
these representations is well-known \cite{FS} and has a special form
(all states --- except for a sink --- are final and from all these
states, an edge of label $0$ goes back to the initial state). We
therefore have the following property which is much stronger than the
previous lemma. If $x$ and $y$ are greedy $U_\beta$-representations
then $x0y$ is also a greedy $U_\beta$-representation.
\end{remark}

\begin{example}
    The Fibonacci system is the Bertrand system associated with the
    golden ratio $(1+\sqrt{5})/2$. Since greedy representations in
    the Fibonacci system are the words not containing two consecutive $1$'s
    \cite{zec}, then for $x,y\in\rep_F(\mathbb{N})$, we have
    $x0y\in\rep_F(\mathbb{N})$.
\end{example}

\begin{definition}
    Let $X\subseteq\mathbb{N}$ be a set of integers. The {\it
      characteristic word of $X$} is an infinite word
    $x_0x_1x_2\cdots$ over $\{0,1\}$ defined by $x_i=1$ if and only if
    $i\in X$.

    Consider for now $X\subseteq\mathbb{N}$ to be an ultimately
    periodic set. The characteristic word of $X$ is therefore an
    infinite word over $\{0,1\}$ of the form $$x_0x_1x_2\cdots =
    uv^\omega$$ 
    where $u$ and $v$ are chosen of minimal length. We say
    that the length $|u|$ of $u$ (resp. the length $|v|$ of $v$) is
    the {\em preperiod} (resp. {\em period}) of $X$. Hence, for all
    $n\ge|u|$, $n\in X$ if and only if $n+|v|\in X$.
\end{definition}

The following lemma is a simple consequence of the minimality of the
period chosen to represent an ultimately periodic set.

\begin{lemma}\label{lem:per}
    Let $X\subseteq\mathbb{N}$ be an ultimately periodic set of period
    $p_X$ and preperiod $a_X$. Let $i,j\ge a_X$. If $i\not\equiv j\mod
    p_X$ then there exists $t<p_X$ such that either $i+t\in X$ and
    $j+t\not\in X$ or, $i+t\not\in X$ and $j+t\in X$.
\end{lemma}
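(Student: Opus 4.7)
The natural strategy is proof by contradiction. I would assume that for all $t \in \{0, 1, \ldots, p_X - 1\}$ one has $i+t \in X \iff j+t \in X$, and derive that $X$ admits a period strictly smaller than $p_X$, contradicting the minimality built into the definition of $p_X$.

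To set up the contradiction, I would let the characteristic word of $X$ be $x_0 x_1 x_2 \cdots = uv^\omega$ with $|u|=a_X$ and $|v|=p_X$ chosen of minimal length. Since $i, j \ge a_X$, we have $x_{i+t} = v_{((i-a_X)+t) \bmod p_X}$ and similarly for $j$. Setting $r := (j - i) \bmod p_X$, the hypothesis $i \not\equiv j \pmod{p_X}$ gives $r \in \{1, \ldots, p_X - 1\}$, and the contradiction assumption translates to
\[
v_k = v_{(k + r) \bmod p_X} \quad \text{for all } k \in \{0, \ldots, p_X - 1\},
\]
because as $t$ ranges over $\{0, \ldots, p_X - 1\}$, so does $(i - a_X + t) \bmod p_X$.

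The next step is a standard cyclic-period argument. Iterating the above identity, $v_k = v_{(k + \alpha r) \bmod p_X}$ for every integer $\alpha$, and taking $d := \gcd(r, p_X)$ together with a B\'ezout relation $\alpha r + \beta p_X = d$ yields $v_k = v_{(k+d) \bmod p_X}$ for all $k$. Since $d$ divides $p_X$ and $d \le r < p_X$, this means $v = w^{p_X/d}$ for $w := v_0 \cdots v_{d-1}$. Hence $uv^\omega = uw^\omega$ with $|w| < p_X$, contradicting the minimality of $|v| = p_X$ in the representation of the characteristic word.

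I do not anticipate a real obstacle here; the only point requiring a bit of care is the passage from the one-step relation $v_k = v_{(k+r) \bmod p_X}$ to genuine (non-cyclic) periodicity of $v$ with period $d = \gcd(r, p_X)$, which is handled by the B\'ezout step above. Everything else is bookkeeping: confirming that $t$ can be taken strictly less than $p_X$ (which matches the statement), and confirming that a smaller period of $v$ immediately yields a smaller-period representation of $x_0 x_1 \cdots$ with the same prefix $u$, so the minimality of $p_X$ alone (independently of any claim about $a_X$) suffices to produce the contradiction.
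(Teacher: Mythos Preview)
Your argument is correct and is precisely the kind of reasoning the paper has in mind: the paper does not spell out a proof but simply states that the lemma is ``a simple consequence of the minimality of the period chosen to represent an ultimately periodic set,'' which is exactly what your contradiction via the cyclic-period/B\'ezout argument establishes. There is nothing to add; your write-up just makes explicit the one-line justification the authors left implicit.
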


We assume that the reader is familiar with automata theory (see for
instance \cite{Saka}) but let us recall some classical results. Let
$L\subseteq\Sigma^*$ be a language over a finite alphabet $\Sigma$ and
$x$ be a finite word over $\Sigma$. We set
$$x^{-1}L=\{z\in\Sigma^*\mid xz\in L\}.$$
We can now define the Myhill-Nerode congruence. Let $x,y\in\Sigma^*$.
We have $x\sim_L y$ if and only if $x^{-1}L=y^{-1}L$. Moreover $L$
is regular if and only if $\sim_L$ has a finite index being the number
of states of the minimal automaton of $L$.
\medskip

\begin{definition}
For a sequence $(U_i)_{i\ge0}$ of integers,
$N_U(m)\in\{1,\ldots,m\}$ denotes the number of values that are
taken infinitely often by the sequence $(U_i\mod m)_{i\ge0}$.
\end{definition}

\begin{proposition}\label{pro:1}
    Let $U=(U_i)_{i\ge 0}$ be a positional numeration system
    satisfying \eqref{eq:condlim}.  If $X\subseteq\mathbb{N}$ is an
    ultimately periodic $U$-recognizable set of period $p_X$, then any
    deterministic finite automaton accepting $\rep_U(X)$ has at least
    $N_U(p_X)$ states.
\end{proposition}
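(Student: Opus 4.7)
My plan is to exhibit, inside any DFA $M=(Q,A_U,\delta,q_0,F)$ accepting $\rep_U(X)$, a $0$-cycle of length at least $N_U(p_X)$. Let $q^* := \delta(q_0,1)$ and trace the $0$-transitions from $q^*$. The sequence $(\delta(q^*,0^i))_{i\ge 0}$ lives in the finite set $Q$ and is therefore eventually periodic; let $P\ge 0$ and $T\ge 1$ be its preperiod and period. Since the $T$ states in the cycle are pairwise distinct, $|Q|\ge T$, and it thus suffices to prove $T\ge N_U(p_X)$.

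For every $i\ge P$ we have $\delta(q^*,0^i)=\delta(q^*,0^{i+T})$, so the DFA cannot separate the words $10^i$ and $10^{i+T}$: for every $z\in A_U^*$, $10^iz\in\rep_U(X)$ iff $10^{i+T}z\in\rep_U(X)$. Fix $t\in\{0,\ldots,p_X-1\}$, choose $m\ge\max_{t'<p_X}|\rep_U(t')|$, and set $z=0^{m-|\rep_U(t)|}\rep_U(t)$, a length-$m$ word of $U$-value $t$. Applying Lemma~\ref{lem:condlim} with any $j$ satisfying $U_j>p_X-1$ shows that, as soon as $i+m$ is sufficiently large, both $10^iz=10^{i+m-|\rep_U(t)|}\rep_U(t)$ and $10^{i+T}z$ are greedy $U$-representations, of values $U_{i+m}+t$ and $U_{i+T+m}+t$ respectively. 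Consequently $U_{i+m}+t\in X$ iff $U_{i+T+m}+t\in X$ for every $t<p_X$ and all sufficiently large $i,m$.

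Taking $i+m$ large enough that $U_{i+m},U_{i+T+m}\ge a_X$, the contrapositive of Lemma~\ref{lem:per} applied to the pair $(U_{i+m},U_{i+T+m})$ forces $U_{i+m}\equiv U_{i+T+m}\pmod{p_X}$. Writing $n=i+m$, this says that $(U_n \bmod p_X)$ is eventually periodic with period dividing $T$; the residues it takes infinitely often, which by definition number $N_U(p_X)$, must all appear inside one such period, so $N_U(p_X)\le T\le|Q|$. The delicate point is matching the arithmetic shift ``$+t$'' furnished by Lemma~\ref{lem:per} with a concatenated suffix realising $\val_U(z)=t$ while keeping $10^iz$ greedy; Lemma~\ref{lem:condlim} is exactly what bridges these two viewpoints.
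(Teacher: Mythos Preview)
Your argument is correct. Both your proof and the paper's rely on the same two ingredients, Lemma~\ref{lem:condlim} (to ensure that $10^{\ell}z$ is a greedy representation for large $\ell$) and Lemma~\ref{lem:per} (to translate ``same residue mod $p_X$'' into ``indistinguishable by suffixes of value $<p_X$''), but they are organised in contrapositive directions. The paper picks indices $h_1,\ldots,h_N$ with pairwise distinct residues $U_{h_i}\bmod p_X$ and exhibits, for each pair, a concrete suffix witnessing that $10^{h_i-c}$ and $10^{h_j-c}$ lie in different Myhill--Nerode classes; this yields $N=N_U(p_X)$ distinct states directly. You instead fix an arbitrary DFA, read off the period $T$ of the $0$-orbit of $\delta(q_0,1)$, and show that $T$ is an eventual period of $(U_n\bmod p_X)$, whence $N_U(p_X)\le T\le|Q|$.

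Your route has the mild bonus of proving something slightly sharper: not only does the automaton have at least $N_U(p_X)$ states, but already the $0$-cycle reached from $\delta(q_0,1)$ has length at least $N_U(p_X)$. The paper's route is perhaps more transparent in that it names the distinguishing suffixes explicitly, which dovetails with the later Proposition~\ref{pro:alternative}. Either way the core mechanism is identical.
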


\begin{proof}
    Let $a_X$ be the preperiod of $X$. By Lemma~\ref{lem:condlim},
    there exists $L$ such that for any $h\ge L$, the words
$$10^{h-|\rep_U(t)|}\rep_U(t),\ t=0,\ldots,p_X-1$$
are greedy $U$-representations. The sequence $(U_i\mod p_X)_{i\ge 0}$
takes infinitely often $N:=N_U(p_X)$ different values. Let
$h_1,\ldots,h_{N}\ge L$ be such that
$$i\neq j\Rightarrow U_{h_i}\not\equiv U_{h_j}\mod p_X$$
and $h_1,\ldots,h_{N}$ can be chosen such that $U_{h_i}>a_X$ for
all $i\in\{1,\ldots,N\}$.
\medskip

By Lemma~\ref{lem:per}, for all $i,j\in\{1,\ldots,N\}$ such that
$i\neq j$, there exists $t_{i,j}<p_X$ such that either
$U_{h_i}+t_{i,j}\in X$ and $U_{h_j}+t_{i,j}\not\in X$, or
$U_{h_i}+t_{i,j}\not\in X$ and $U_{h_j}+t_{i,j}\in X$. Therefore,
$$w_{i,j}=0^{|\rep_U(p_X-1)|-|\rep_U(t_{i,j})|}\rep_U(t_{i,j})$$ is a
word such that either
$$10^{h_i-|\rep_U(p_X-1)|}w_{i,j}\in\rep_U(X) \text{ and } 
10^{h_j-|\rep_U(p_X-1)|}w_{i,j}\not\in\rep_U(X),$$
or
$$10^{h_i-|\rep_U(p_X-1)|}w_{i,j}\not\in\rep_U(X) \text{ and } 
10^{h_j-|\rep_U(p_X-1)|}w_{i,j}\in\rep_U(X).$$ Therefore the words
$10^{h_1-|\rep_U(p_X-1)|},\ldots,10^{h_N-|\rep_U(p_X-1)|}$ are
pairwise non\-equi\-valent for the relation $\sim_{\rep_U(X)}$ and the
minimal automaton of $\rep_U(X)$ has at least $N=N_U(p_X)$ states.
\end{proof}

The previous proposition has an immediate consequence for getting a bound
on the period of a periodic set accepted by a given DFA.

\begin{corollary}\label{cor:2}
    Let $U=(U_i)_{i\ge 0}$ be a positional numeration system satisfying 
    \eqref{eq:condlim}.  Assume that 
    $$\lim_{m\to+\infty}N_U(m)=+\infty.$$ Then the period of an
    ultimately periodic set $X\subseteq\mathbb{N}$ such that
    $\rep_U(X)$ is accepted by a DFA with $d$ states is bounded by the
    smallest integer $s_0$ such that for all $m\ge s_0$,
    $N_U(m)>d$.
\end{corollary}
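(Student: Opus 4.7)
The plan is to derive this essentially directly from Proposition~\ref{pro:1}, as the corollary only repackages that statement using the hypothesis $\lim_{m\to+\infty}N_U(m)=+\infty$.

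First I would observe that under the hypothesis $\lim_{m\to+\infty}N_U(m)=+\infty$, the integer $s_0$ in the statement is well-defined: since $N_U(m)\to+\infty$, only finitely many $m$ satisfy $N_U(m)\le d$, so $\{m\ge 1\mid\forall m'\ge m,\ N_U(m')>d\}$ is nonempty and admits a least element $s_0$.

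Next, let $X\subseteq\mathbb{N}$ be ultimately periodic with period $p_X$ such that $\rep_U(X)$ is accepted by some DFA $\mathcal{A}$ with $d$ states. Since $U$ satisfies \eqref{eq:condlim}, Proposition~\ref{pro:1} applies, so $\mathcal{A}$ has at least $N_U(p_X)$ states. Hence $N_U(p_X)\le d$.

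Finally I argue by contraposition against the definition of $s_0$: if $p_X\ge s_0$, then by the defining property of $s_0$ one would have $N_U(p_X)>d$, contradicting $N_U(p_X)\le d$. Therefore $p_X<s_0$, which gives the desired bound. There is no real obstacle here; the only subtlety worth spelling out is that the hypothesis is used twice, once to guarantee the existence of $s_0$ and once (implicitly, via Proposition~\ref{pro:1}) to convert the state count $d$ into a bound on $p_X$.
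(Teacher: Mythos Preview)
Your proof is correct and matches the paper's treatment, which presents this corollary as an immediate consequence of Proposition~\ref{pro:1} without further argument. One minor quibble: in your closing remark, the hypothesis $\lim_{m\to+\infty}N_U(m)=+\infty$ is only needed for the existence of $s_0$; Proposition~\ref{pro:1} itself requires only condition~\eqref{eq:condlim}.
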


A result similar to the previous corollary (in the sense that it
permits to give an upper bound on the period) can be stated as
follows. This result will be used later on in our decision procedure
to compute some explicit estimate.

\begin{proposition}\label{pro:alternative}
    Let $U=(U_i)_{i\ge 0}$ be a positional numeration system
    satisfying \eqref{eq:condlim} and $X\subseteq\mathbb{N}$ be an
    ultimately periodic $U$-recognizable set of period $p_X$. Let $c$
    be a divisor of $p_X$. If $1$ occurs infinitely many times in
    $(U_i\mod c)_{i\ge 0}$ then any deterministic finite automaton
    accepting $\rep_U(X)$ has at least $c$ states.
\end{proposition}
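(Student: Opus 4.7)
The plan is to adapt the argument of Proposition~\ref{pro:1}. The hypothesis only guarantees that one residue class (namely $1$) is hit infinitely often in $(U_i \mod c)$, so we cannot reproduce $c$ distinct residues of $U_{h_i}\mod p_X$ as in Proposition~\ref{pro:1}. Instead we exploit that $1$ generates $\mathbb{Z}/c\mathbb{Z}$ additively: we build $c$ prefixes containing varying numbers of ones at positions where $U_h\equiv 1\mod c$, whose numerical values then cover all $c$ residues modulo $c$.

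Set $m:=|\rep_U(p_X-1)|$. Using the hypothesis together with Lemma~\ref{lem:condlim}, choose indices $h^{(0)}<h^{(1)}<\cdots<h^{(c-1)}$ such that $U_{h^{(i)}}\equiv 1\mod c$ for each $i$, $U_{h^{(0)}}>a_X$ (where $a_X$ is the preperiod of $X$), and each $h^{(i+1)}$ is large enough relative to $h^{(i)}$ (via Lemma~\ref{lem:condlim} applied with parameter $h^{(i)}+1$) that the concatenations built below are greedy $U$-representations. For $r\in\{1,\ldots,c\}$, define
$$\pi_r:=10^{h^{(r-1)}-h^{(r-2)}-1}\,10^{h^{(r-2)}-h^{(r-3)}-1}\cdots 10^{h^{(1)}-h^{(0)}-1}\,10^{h^{(0)}-m},$$
a word containing exactly $r$ occurrences of the letter $1$. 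Appending any word $w$ of length $m$ shifts these ones to positions $h^{(0)},h^{(1)},\ldots,h^{(r-1)}$ in $\pi_r w$, so $\val_U(\pi_r w)=v_r+\val_U(w)$ where $v_r:=\sum_{i=0}^{r-1}U_{h^{(i)}}\equiv r\mod c$.

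To distinguish $\pi_r$ from $\pi_s$ for $r\neq s$ in $\{1,\ldots,c\}$, observe that $v_r-v_s\equiv r-s\not\equiv 0\mod c$, and since $c$ divides $p_X$ this forces $v_r\not\equiv v_s\mod p_X$; moreover both values exceed $a_X$. Lemma~\ref{lem:per} then produces $t_{r,s}<p_X$ such that exactly one of $v_r+t_{r,s}$ and $v_s+t_{r,s}$ belongs to $X$. Choosing the suffix $w_{r,s}:=0^{m-|\rep_U(t_{r,s})|}\rep_U(t_{r,s})$, the words $\pi_r w_{r,s}$ and $\pi_s w_{r,s}$ are greedy $U$-representations of $v_r+t_{r,s}$ and $v_s+t_{r,s}$ respectively, so exactly one of them lies in $\rep_U(X)$. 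Thus $\pi_1,\ldots,\pi_c$ realize $c$ pairwise distinct Myhill-Nerode classes of $\rep_U(X)$, and any DFA accepting this language has at least $c$ states.

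The main technical point is the inductive greediness check: after prepending each new leading block $10^{h^{(i+1)}-h^{(i)}-1}$, Lemma~\ref{lem:condlim} must apply to the word constructed so far (whose value is bounded above by $U_{h^{(i)}+1}$), which forces $h^{(i+1)}$ to lie above a threshold depending on $h^{(i)}$. Since $U_h\equiv 1\mod c$ for infinitely many $h$, these thresholds can always be met, so the construction goes through.
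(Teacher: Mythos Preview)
Your proof is correct and follows essentially the same approach as the paper's: both construct $c$ nested prefixes $10^{n_c}\cdots 10^{n_1}$ (your $\pi_r$) whose ones sit at positions where $U_h\equiv 1\pmod c$, so that the $r$-th prefix has value $\equiv r\pmod c$; since $c\mid p_X$, Lemma~\ref{lem:per} then separates them in the Myhill--Nerode congruence. Your parameterisation by positions $h^{(i)}$ versus the paper's gap lengths $n_i$ is purely notational, and your final paragraph making the inductive greediness check explicit is a welcome clarification of what the paper leaves to the phrase ``applying several times Lemma~\ref{lem:condlim}''.
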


\begin{proof} Let $a_X$ be the  preperiod of $X$. 
    Applying several times Lemma~\ref{lem:condlim}, there exist
    $n_1,\ldots,n_c$ such that
    $$10^{n_c}10^{n_{c-1}}\cdots
    10^{n_1}0^{|\rep_U(p_X-1)|-|\rep_U(t)|}\rep_U(t),\,t=0,\ldots,p_X-1$$
    are greedy $U$-representations. Moreover, since $1$ occurs
    infinitely many times in the sequence $(U_i\mod c)_{i\ge 0}$,
    $n_1,\ldots,n_c$ can be chosen such that, for all
    $j=1,\ldots,c$,
                $$\val_U(10^{n_j}\cdots 10^{n_1+|\rep_U(p_X-1)|})\equiv j \mod c$$
                and
                $$\val_U(10^{n_1+|\rep_U(p_X-1)|})>a_X.$$
                For $i,j\in\{1,\ldots,c\}$, $i\neq j$, by
                Lemma~\ref{lem:per} and since $c$ divides $p_X$, the words
                $$10^{n_i}\cdots 10^{n_1} \text{ and } 10^{n_j}\cdots 10^{n_1}$$
                are nonequivalent for $\sim_{\rep_U(X)}$. This can be
                shown by concatenating some word of the kind
                $0^{|\rep_U(p_X-1)|-|\rep_U(t)|}\rep_U(t)$ with
                $t<p_X$, as in the proof of Proposition~\ref{pro:1}.
                This concludes the proof.
\end{proof}

\begin{definition}
For a sequence $(U_i)_{i\ge 0}$ of integers, if $(U_i\mod m)_{i\ge 0}$
is ultimately periodic, we denote its (minimal) preperiod by
$\iota_U(m)$ (we choose the notation $\iota$ to allude to the word index
which is equally used as preperiod) and its (minimal) period by
$\pi_U(m)$. 
\end{definition}

\begin{remark}\label{rem:nupiu}
    Observe that for any linear recurrence sequence of order $k$
    satisfying \eqref{eq:linrec}, we have
    $$N_U(m)\le \pi_U(m)\le (N_U(m))^k.$$
    Therefore,
    $\lim_{m\to+\infty} N_U(m)=+\infty$ if and only if
    $\lim_{m\to+\infty} \pi_U(m)=+\infty$.  Notice that if $m=p.q$
    with $\gcd(p,q)=1$, then $\pi_U(m)=\lcm\{\pi_U(p),\pi_U(q)\}$.
\end{remark}

Now we want to obtain an upper bound on the preperiod of any
ultimately periodic $U$-recognizable set recognized by a given DFA.

\begin{proposition}\label{pro:preper}
    Let $U=(U_i)_{i\ge 0}$ be a linear numeration system. Let
    $X\subseteq\mathbb{N}$ be an ultimately periodic $U$-recognizable
    set of period $p_X$ and preperiod $a_X$. Then any deterministic finite
    automaton accepting $\rep_U(X)$ has at least
    $|\rep_U(a_X-1)|-\iota_U(p_X)$ states.
\end{proposition}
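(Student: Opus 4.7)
The plan is a contradiction argument via pigeonhole and pumping. Let $v=\rep_U(a_X-1)$, $L=|v|$, and $\iota=\iota_U(p_X)$; assume some DFA $\mathcal{A}$ accepting $\rep_U(X)$ has fewer than $L-\iota$ states. Reading $v$ in $\mathcal{A}$ from the initial state, write $q_i$ for the state reached after the length-$i$ prefix. Among the $L-\iota+1$ states $q_\iota,q_{\iota+1},\dots,q_L$, pigeonhole forces $q_i=q_j$ for some $\iota\le i<j\le L$, giving a self-loop at $q_i$ reading $u:=v_{i+1}\cdots v_j$.

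For $k\ge 0$, set $v^{(k)}:=v_1\cdots v_i\cdot u^k\cdot v_{j+1}\cdots v_L$. Each $v^{(k)}$ ends at the same state as $v=v^{(1)}$, so $v^{(k)}\in\rep_U(X)$ iff $a_X-1\in X$. The aim is to produce $k\ge 2$ for which $v^{(k)}$ is a greedy $U$-representation and $\val_U(v^{(k)})\equiv a_X-1\pmod{p_X}$. Since $|v^{(k)}|>L$ for $k\ge 2$ and the leading digit of $v^{(k)}$ is the nonzero $v_1$, one has $\val_U(v^{(k)})\ge U_L>a_X-1$, hence $\ge a_X$. Thus the value lies in the periodic regime of $X$. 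By minimality of the preperiod $a_X$ one has $a_X-1\in X\iff a_X-1+p_X\notin X$, and combined with Lemma~\ref{lem:per} this gives $a_X-1+mp_X\in X\iff a_X-1\notin X$ for every $m\ge 1$. Hence $\val_U(v^{(k)})\in X\iff a_X-1\notin X$, contradicting the acceptance status forced by the loop.

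For the construction of $k$, the hypothesis $i\ge\iota$ ensures that for $k$ sufficiently large every $U$-index appearing in the varying part of $\val_U(v^{(k)})$ lies in the periodic tail of $(U_n\bmod p_X)_n$. A direct expansion of the increment $\val_U(v^{(k+1)})-\val_U(v^{(k)})$ shows that $(\val_U(v^{(k)})\bmod p_X)_k$ is eventually periodic in $k$ with period dividing $\pi_U(p_X)$, so iterating the loop by an appropriate multiple of this period (noting that each power $u^r$ is also a self-loop at $q_i$) can be arranged to return $\val_U(v^{(k)})\bmod p_X$ to the target residue $(a_X-1)\bmod p_X$.

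The principal technical obstacles are (a) that the pumped word $v^{(k)}$ must be shown to be a greedy $U$-representation --- greediness is a non-local condition on the digit string, and iterating an internal block of a greedy word need not preserve it --- and (b) that the residue $(a_X-1)\bmod p_X$, attained at $k=1$, must genuinely recur in the eventually periodic sequence $(\val_U(v^{(k)})\bmod p_X)_k$ at some $k\ge 2$, rather than appear only in its preperiod. Both points should be handled by exploiting the freedom to enlarge the loop (replacing $u$ by $u^r$), comparing partial sums of $v^{(k)}$ against the greedy thresholds $U_{t+1}$ through the linear recurrence of $U$, and iterating enough times to enter the periodic regime of the residue sequence. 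This verification is the technical heart of the argument.
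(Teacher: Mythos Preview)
Your overall strategy---pump inside $\rep_U(a_X-1)$, produce a strictly longer word with the same residue modulo $p_X$, and contradict the minimality of $a_X$---is exactly the paper's. The crucial divergence is the \emph{placement} of the pigeonhole, and this is what creates your obstacle~(b). You look for a repeated state among $q_\iota,\dots,q_L$, so the loop $u=v_{i+1}\cdots v_j$ lies in positions $i+1,\dots,j$ with $\iota\le i<j\le L$; the fixed suffix $v_{j+1}\cdots v_L$ then has length $L-j$, which may be smaller than $\iota$ (even $0$). The smallest $U$-index touched by the increment $\val_U(v^{(k+1)})-\val_U(v^{(k)})$ is $L+(k-1)\,|u|-i$, which at $k=1$ equals $L-i$; your hypothesis $i\ge\iota$ gives only $L-i\le L-\iota$, not $L-i\ge\iota$. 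So for small $k$ the increment can involve indices in the preperiod of $(U_n\bmod p_X)_n$, and the residue $a_X-1$ attained at $k=1$ may sit in the preperiod of $(\val_U(v^{(k)})\bmod p_X)_k$ and never recur. Your proposed fix (iterate into the periodic regime) does not recover this lost value.

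The paper instead applies pigeonhole to $q_0,\dots,q_{L-\iota}$, factoring $\rep_U(a_X-1)=w_1w_2w_3\,w_4$ with $|w_4|=\iota$ and the loop $w_2$ inside $w_1w_2w_3$. Now the fixed tail $w_3w_4$ has length $\ge\iota$, so \emph{every} varying $U$-index is $\ge\iota$ already at $k=1$, and one computes directly
\[
\val_U\bigl(w_1w_2^{\,1+p_X\pi_U(p_X)}w_3w_4\bigr)\equiv\val_U(w_1w_2w_3w_4)=a_X-1\pmod{p_X};
\]
obstacle~(b) never arises. Simply shifting your pigeonhole window from $\{q_\iota,\dots,q_L\}$ to $\{q_0,\dots,q_{L-\iota}\}$ repairs the argument. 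As for obstacle~(a), you are right that greediness of the pumped word is not obvious in general; the paper's proof does not address it either. Note, though, that when $a_X-1\in X$ the pumped word lies in $\rep_U(X)\subseteq\rep_U(\mathbb{N})$ by the pumping equivalence and is therefore automatically greedy, so only the case $a_X-1\notin X$ is genuinely delicate.
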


The arguments of the following proof are similar to the one found in
\cite{Hon}.

\begin{proof}  W.l.o.g. we can assume that
    $|\rep_U(a_X-1)|-\iota_U(p_X)>0$.  The sequence $(U_i\mod
    p_X)_{i\ge 0}$ is ultimately periodic with preperiod
    $\iota_U(p_X)$ and period $\pi_U(p_X)$.  Proceed by contradiction
    and assume that $\mathcal{A}$ is a deterministic finite automaton
    with less than $|\rep_U(a_X-1)|-\iota_U(p_X)$ states accepting
    $\rep_U(X)$.  There exist words $w,w_4$ such that the greedy
    $U$-representation of $a_X-1$ can be factorized as
                $$\rep_U(a_X-1)=ww_4$$
                with $|w|=|\rep_U(a_X-1)|-\iota_U(p_X)$. By the
                pumping lemma, $w$ can be written $w_1w_2w_3$ with
                $w_2\neq\varepsilon$ and for all $i\ge 0$,
                $$w_1w_2^iw_3w_4\in\rep_U(X)\Leftrightarrow
                w_1w_2w_3w_4\in\rep_U(X).$$
                By minimality of $a_X$ and
                $p_X$, either $a_X-1\in X$ and for all $n\ge 1$,
                $a_X+np_X-1\not\in X$, or $a_X-1\not\in X$ and for all
                $n\ge 1$, $a_X+np_X-1\in X$.  Using the ultimate
                periodicity of $(U_i\mod p_X)_{i\ge 0}$, we observe
                that, because $|w_4|=\iota_U(p_X)$, for all $i\ge 0$, we have
                $$\val_U(w_1w_2^{i\pi_U(p_X)}w_2w_3w_4)\equiv\val_U(w_1w_2w_3w_4)+
                i\val_U(w_2^{\pi_U(p_X)}0^{|w_2w_3w_4|})\mod p_X.$$
Therefore repeating a factor
                of length multiple of $\pi_U(p_X)$ exactly $p_X$ times
                does not change the value mod $p_X$ and we get
                $$\val_U(w_1w_2^{p_X\pi_U(p_X)}w_2w_3w_4)\equiv\val_U(w_1w_2w_3w_4)\mod p_X,$$
                leading to a contradiction. 
\end{proof}

For the sake of completeness, we restate some well-known properties of
ultimately periodic sets (see for instance \cite{Saka} for a prologue
on Pascal's machine for $b$-ary systems).

\begin{lemma}\label{lem:upU}
    Let $a,b$ be nonnegative integers and $U=(U_i)_{i\ge 0}$ be a
    linear numeration system. The language
$$\val_U^{-1}(a\mathbb{N}+b)=\{w\in A_U^*\mid \val_U(w)\in a\mathbb{N}+b\}\subset A_U^*$$
is regular. In particular, if $\mathbb{N}$ is $U$-recognizable then a
DFA accepting $\rep_U(a\mathbb{N}+b)$ can be obtained efficiently and
any ultimately periodic set is $U$-recognizable.
\end{lemma}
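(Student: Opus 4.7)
The plan is to exploit the linearity of $U$ to show that the sequence $(U_i\bmod a)_{i\ge 0}$ is ultimately periodic, and then turn this periodicity into a finite-state computation of residues modulo $a$. Since $U$ satisfies a linear recurrence of order $k$, the $k$-tuples $(U_i,\ldots,U_{i+k-1})\bmod a$ take at most $a^k$ distinct values; by pigeonhole two of them coincide, and the recurrence propagates the coincidence forever, yielding a finite preperiod $\iota:=\iota_U(a)$ and period $\pi:=\pi_U(a)$, both effectively computable from \eqref{eq:linrec}.

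Next I would construct a DFA recognizing the \emph{reverse} language
\[
L^R=\bigl\{u_0u_1\cdots u_\ell\in A_U^*\,:\,\sum_{i=0}^\ell u_iU_i\in a\mathbb{N}+b\bigr\}.
\]
The obstacle to reading $\val_U^{-1}(a\mathbb{N}+b)$ directly from the left is that the weight of the first letter depends on the total length; reversing the word eliminates this issue, because after reading $u_0\cdots u_{j-1}$ the accumulated partial sum is already the definitive contribution of those digits. The states of the DFA are pairs $(s,i)\in\{0,\ldots,a-1\}\times\{0,1,\ldots,\iota+\pi-1\}$, with $s$ the partial sum modulo $a$ and $i$ the current position; by periodicity we may cap $i$ at $\iota+\pi-1$ and wrap from $\iota+\pi-1$ back to $\iota$. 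Reading a digit $c$ from $(s,i)$ produces $(s+c\,U_i\bmod a,\,\mathrm{next}(i))$, and the final states are those whose first coordinate witnesses membership in $a\mathbb{N}+b$ (which reduces to the congruence class $b\bmod a$, up to finitely many small-value corrections if $b\ge a$). As the class of regular languages is closed under reversal, $\val_U^{-1}(a\mathbb{N}+b)=(L^R)^R$ is regular.

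For the remaining assertions, note that $\rep_U(a\mathbb{N}+b)=\val_U^{-1}(a\mathbb{N}+b)\cap\rep_U(\mathbb{N})$; when $\mathbb{N}$ is $U$-recognizable, the standard product construction applied to DFAs for the two languages yields a DFA for $\rep_U(a\mathbb{N}+b)$ effectively. Any ultimately periodic set $X\subseteq\mathbb{N}$ writes as a finite union of arithmetic progressions $a\mathbb{N}+b_j$ together with a finite correction set, and finite unions of $U$-recognizable sets are again $U$-recognizable, so $X$ is $U$-recognizable.

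The main obstacle is the exponential blow-up intrinsic to the reversal-then-determinization step; to obtain a genuinely efficient DFA one can instead build a DFA for $\val_U^{-1}(a\mathbb{N}+b)$ directly, letting the state encode the ultimately periodic ``weight template'' that records how the already-read prefix would contribute once the total length becomes known -- a more delicate construction, whose finiteness still rests on the ultimate periodicity of $(U_i\bmod a)$ established in the first step.
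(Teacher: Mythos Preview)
Your proposal is correct and follows essentially the same approach as the paper: both arguments build a DFA for the \emph{reversed} language whose states are pairs $(\text{residue mod }a,\ \text{position in the preperiod/period of }(U_i\bmod a))$, with the identical transition rule $(s,i)\mapsto(s+cU_i\bmod a,\ \text{next}(i))$, and then appeal to closure under reversal and intersection with $\rep_U(\mathbb{N})$. Your final paragraph on the blow-up of reversal-then-determinization and the alternative left-to-right construction goes beyond what the paper discusses, but the core proof is the same.
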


Before giving the proof, notice that for any integer $n\ge 0$,
$\val_U^{-1}(n)\setminus 0^+A_U^*$ is a finite set of words $\{x_1,\ldots,x_{t_n}\}$ over
$A_U$ such that $\val_U(x_i)=n$ for all $i=1,\ldots,t_n$. This set
contains in particular $\rep_U(n)$.

\begin{proof}
    Since regular sets are stable under finite modification (i.e.,
    adding or removing a finite number of words in the language), we can
    assume that $0\le b<a$. The sequence $(U_i\mod a)_{i\ge 0}$ is
    ultimately periodic with preperiod $\ell=\iota_U(a)$ and period
    $p=\pi_U(a)$. It is an easy exercise to build a deterministic
    finite automaton $\mathcal{A}$ accepting the reversal of the words in
    $\{w\in A_U^*\mid \val_U(w)\in a\mathbb{N}+b\}$. The alphabet of
    the automaton is $A_U$. States are pairs $(r,s)$ where $0\le r<a$
    and $0\le s<\ell+p$. The initial state is $(0,0)$.  Final states
    are the ones with the first component equal to $b$.  Transitions
    are defined as follows
$$\forall s<\ell+p-1:\ (r,s)\stackrel{j}{\longrightarrow} (jU_s+r\mod a,\ s+1)$$
$$(r,\ell+p-1)\stackrel{j}{\longrightarrow} (jU_{\ell+p-1}+r\mod a,\ \ell),$$
for all $j\in A_U$. Notice that $\mathcal{A}$ does not check the
greediness of the accepted words, the construction only relies on the
$U$-numerical value of the words modulo $a$.

For the particular case, one has to consider the intersection of two
regular languages $\rep_U(\mathbb{N})\cap\val_U^{-1}(a\mathbb{N}+b)$.
\end{proof}

\begin{remark}\label{rem:nurec}
    In the previous statement, the assumption about the
    $U$-recognizability of $\mathbb{N}$ is of particular interest.
    Indeed, it is well-known that for an arbitrary linear numeration
    system, $\mathbb{N}$ is in general {\it not} $U$-recognizable. If
    $\mathbb{N}$ is $U$-recognizable, then $U$ satisfies a linear
    recurrence relation \cite{Sha}, but the converse does not hold.
    Sufficient conditions on the recurrence relation that $U$
    satisfies for $\mathbb{N}$ to be $U$-recognizable are given in
    \cite{Ho}.
\end{remark}

Our decision procedure will also make use of the following result.

\begin{lemma}\label{lem:equiv}
    Let $U=(U_i)_{i\ge 0}$ be an increasing sequence satisfying a linear 
    recurrence relation of order $k$ of the kind \eqref{eq:linrec}.
    The following assertions are equivalent:
    \begin{itemize}
      \item[(i)] $\lim_{m\to+\infty}N_U(m)=+\infty$
      \item[(ii)] for all prime divisors $p$ of $a_k$, $\lim_{v\to+\infty}N_U(p^v)=+\infty$.
    \end{itemize}
    In particular, if $a_k=\pm 1$, then $\lim_{m\to+\infty}N_U(m)=+\infty$.
\end{lemma}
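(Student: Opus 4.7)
The direction (i) $\Rightarrow$ (ii) is immediate: taking the subsequence $m = p^v$ for any fixed prime $p$ yields $N_U(p^v) \to \infty$ as $v \to \infty$. For the converse I would argue contrapositively. Using Remark~\ref{rem:nupiu}, the inequality $N_U(m) \leq \pi_U(m) \leq N_U(m)^k$ lets me translate both conditions into statements about the minimal period $\pi_U$, and the multiplicativity $\pi_U(mm') = \lcm(\pi_U(m), \pi_U(m'))$ for coprime $m, m'$ will play a central role. So I assume there is a constant $M'$ with $\pi_U(m) \leq M'$ for infinitely many moduli $m$, and I aim to deduce that $\pi_U(p^v)$ remains bounded for some prime divisor $p$ of $a_k$.

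For each such bad $m$, I would decompose $m = m_0 m_1$ with $\gcd(m_0, a_k) = 1$ and every prime factor of $m_1$ dividing $a_k$. Multiplicativity gives $\pi_U(m_0), \pi_U(m_1) \leq M'$. If (ii) held, then $\pi_U(p^v) \to \infty$ for each of the finitely many primes $p$ dividing $a_k$, forcing $m_1$ to lie in a finite set; consequently $m_0$ would be unbounded along the bad set, which is the source of the eventual contradiction.

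Now I exploit the hypothesis $\gcd(m_0, a_k) = 1$. Since $a_k \neq 0$, the recurrence \eqref{eq:linrec} can be solved for $U_i$ in terms of $U_{i+1}, \ldots, U_{i+k}$ by dividing by $a_k$, which is a unit modulo $m_0$; hence the shift on state $k$-tuples is a bijection of $(\mathbb{Z}/m_0\mathbb{Z})^k$ and $(U_i \bmod m_0)_{i\ge 0}$ is \emph{purely} periodic. Setting $L = \lcm(1, 2, \ldots, M')$, every period $\pi_U(m_0) \leq M'$ divides $L$, so pure periodicity yields $U_L \equiv U_0 = 1 \pmod{m_0}$. As $m_0$ takes arbitrarily large values on the bad set, the integer $U_L - 1$ is divisible by arbitrarily large numbers, forcing $U_L = 1 = U_0$; but $L \geq 1$ and $U$ is strictly increasing, a contradiction. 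The final clause is immediate: if $a_k = \pm 1$, then $a_k$ has no prime divisors, so (ii) is vacuously true and (i) follows.

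I expect the main obstacle to be exactly the assertion of \emph{pure} periodicity of $(U_i \bmod m_0)$: without it one only obtains $U_L \equiv U_0$ for indices larger than some preperiod $\iota_U(m_0)$ that could grow with $m_0$, destroying the uniform contradiction I need from strict monotonicity. The coprimality $\gcd(m_0, a_k) = 1$ is precisely what eliminates the preperiod, which is why factoring $m$ so as to isolate the primes dividing $a_k$ into $m_1$ is the key structural manoeuvre.
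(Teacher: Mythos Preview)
Your argument is correct, and it takes a somewhat different route from the paper's own proof. Both proofs isolate the part of $m$ coprime to $a_k$ and use that, when $\gcd(m_0,a_k)=1$, the recurrence can be run backwards modulo $m_0$ so that $(U_i \bmod m_0)_{i\ge 0}$ is \emph{purely} periodic. From this common structural fact the two proofs diverge. The paper argues directly: for $c$ coprime to $a_k$, pure periodicity implies that the first $\alpha(c)$ terms $U_0<\cdots<U_{\alpha(c)-1}<c$ are distinct residues that recur infinitely often, giving $N_U(c)\ge \alpha(c)\to\infty$; combined with (ii) for the prime-power parts, this forces $N_U(m)\to\infty$. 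You instead argue by contradiction with a uniform-period trick: bounding $\pi_U(m)\le M'$ along an infinite set, you fix $L=\lcm(1,\dots,M')$ once and for all, obtain $U_L\equiv U_0\pmod{m_0}$ for unboundedly many $m_0$, and conclude $U_L=U_0$, contradicting strict monotonicity.

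Your approach has the pleasant feature that it never computes or bounds $N_U$ directly; everything is phrased through $\pi_U$ and the single numerical identity $U_L=U_0$. The paper's approach, by contrast, gives the explicit quantitative estimate $N_U(c)\ge \alpha(c)$, which is reused later (e.g.\ in the proof of Theorem~\ref{the:finfin}), so it extracts a bit more information along the way. One cosmetic remark: the lemma is stated for any increasing linear recurrence sequence, not necessarily with $U_0=1$; your contradiction only needs $U_L=U_0$ with $L\ge 1$, so the ``$=1$'' you wrote is superfluous (though harmless in the paper's setting).
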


\begin{proof} It is enough to show that (ii) implies (i). 
    Let the prime decomposition of $|a_k|$ be $|a_k|=p_1^{u_1}\cdots
    p_r^{u_r}$ with $u_1,\ldots,u_r> 0$. It is obvious that if
    $m=p_1^{v_1}\cdots p_r^{v_r}c$ with $v_1,\ldots,v_r\ge0$ and
    $\gcd(a_k,c)=1$ then
    \begin{equation*}
    \pi_U(m)=\lcm \{\pi_U(p_1^{v_1}),\ldots,
    \pi_U(p_r^{v_r}),\pi_U(c)\}.    
    \end{equation*}
    Notice that $m$ tends to infinity if and only if at least one of
    the $v_j$'s or $c$ tends to infinity. 
    
    Assume first that for some $j\in\{1,\ldots,r\}$, $v_j\to +\infty$.
    By assumption we have that
    $\lim_{v_j\to+\infty}N_U(p_j^{v_j})=+\infty$ and by
    Remark~\ref{rem:nupiu}, we get that
    $\lim_{v_j\to+\infty}\pi_U(p_j^{v_j})=+\infty$. Therefore 
    $\pi_U(m)$ takes values larger than any constant by considering an
    integer $m$ which is divisible by a sufficiently large power of
    $p_j$. Again using Remark~\ref{rem:nupiu}, the same conclusion
    holds for $N_U(m)$.
    
    Let $C=\{c_0<c_1<c_2<\cdots\}$ be the set of natural numbers prime
    to $a_k$. For all $c\in C$, the sequence $(U_i\mod c)_{i\ge 0}$ is
    ultimately periodic but it is even purely periodic.  Indeed, for
    all $i\ge 0$, $U_{i+k}$ is determined by the $k$ previous terms
    $U_{i+k-1},\ldots,U_i$. But since $\gcd(a_k,c)=1$, $a_k$ is
    invertible modulo $c$ and for all $i\ge 0$, $U_i\mod c$ is also
    determined by the $k$ following terms $U_{i+1},\ldots,U_{i+k}$.
    By definition of $N_U(c)$, the sequence $(U_i\mod c)_{i\ge 0}$
    takes exactly $N_U(c)$ different values because any term appears
    infinitely often. Let $\alpha$ be the function mapping
    $m\in\mathbb{N}$ onto the smallest index $\alpha(m)$ such that
    $U_{\alpha(m)}\ge m$.  Since $U$ is increasing, $\alpha$ is
    non-decreasing and $\lim_{m\to+\infty}\alpha(m)=+\infty$. From
    this last observation and from the pure periodicity of $(U_i\mod
    c)_{i\ge 0}$, it follows that for all $c\in C$,
    $N_U(c)\ge\alpha(c)$ because $U_0<\cdots<U_{\alpha(c)-1}<c$.
    Consequently, we get
    $$\lim_{n\to+\infty}N_U(c_n)=+\infty.$$
    
    Any large enough integer $m$ contains either a large power of some
    $p_j$ or some large $c$ prime to $a_k$ and consequently (i) holds.    
\end{proof}

\begin{theorem}\label{the:finfin}
    Let $U=(U_i)_{i\ge 0}$ be a linear numeration system such that
    $\mathbb{N}$ is $U$-recognizable, satisfying  condition
    \eqref{eq:condlim}. Assume that 
    $$\lim_{m\to+\infty}N_U(m)=+\infty.$$
    Then it is decidable whether
    or not a $U$-recognizable set is ultimately periodic.
\end{theorem}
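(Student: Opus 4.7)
The plan is to follow the classical Honkala-style template: given a DFA $\mathcal{A}$ with $d$ states accepting $\rep_U(X)$, derive effective upper bounds on both the period and the preperiod that $X$ must satisfy \emph{if} $X$ is ultimately periodic; enumerate all finitely many candidate pairs $(p,a)$; for each one construct a DFA for the unique ultimately periodic set $Y_{p,a}$ whose characteristic word agrees with that of $X$ on $\{0,1,\ldots,a+p-1\}$ and extends $p$-periodically thereafter; and test $\rep_U(X) = \rep_U(Y_{p,a})$ via the standard decidability of DFA equivalence. The algorithm outputs ``yes'' if and only if some candidate matches.

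For the period bound, Corollary~\ref{cor:2} combined with the hypothesis $\lim_{m\to+\infty} N_U(m) = +\infty$ furnishes an integer $P$ with $p_X \le P$. Since $(U_i \bmod m)_{i \ge 0}$ is ultimately periodic with effectively computable preperiod and period, each individual $N_U(m)$ is computable; coupled with Lemma~\ref{lem:equiv}, which reduces matters to prime-power moduli $p^v$ with $p \mid a_k$ together with the explicit lower bound $N_U(c) \ge \alpha(c)$ for $c$ coprime to $a_k$, one can effectively produce such a $P$. For the preperiod bound, given a candidate $p \le P$, compute $\iota_U(p)$ and apply Proposition~\ref{pro:preper}: if $p_X = p$ then $|\rep_U(a_X - 1)| \le d + \iota_U(p)$, whence $a_X \le U_{d + \iota_U(p)}$, an effectively computable bound.

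For each candidate pair $(p,a)$ in this finite search space, use $\mathcal{A}$ to decide membership of each $n < a+p$ in $X$ (by running $\mathcal{A}$ on $\rep_U(n)$); this determines $Y_{p,a}$ as a finite union of arithmetic progressions together with a finite corrective set. Lemma~\ref{lem:upU} together with the $U$-recognizability of $\mathbb{N}$ then yields an effectively constructible DFA for $\rep_U(Y_{p,a})$, after which $\rep_U(X) = \rep_U(Y_{p,a})$ is decided by the usual product-automaton method. The main obstacle I foresee lies in the second step: turning the purely qualitative divergence $N_U(m) \to +\infty$ into an \emph{effective} $P$. The ``coprime to $a_k$'' case of Lemma~\ref{lem:equiv} is already effective via $\alpha$, so the delicate point is establishing explicit growth estimates for $N_U(p^v)$ at the prime divisors $p$ of $a_k$; I expect this is precisely what the $p$-adic analysis announced for Section~\ref{sec:lrrc} is intended to deliver.
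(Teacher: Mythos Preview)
Your template matches the paper's: bound the period, bound the preperiod via Proposition~\ref{pro:preper}, enumerate the finitely many candidates, and test equality using Lemma~\ref{lem:upU}. You correctly identify the only delicate point as making the period bound effective, but your guess about its resolution is off: Section~\ref{sec:lrrc} only \emph{characterizes} when $N_U(m)\to+\infty$ (Theorem~\ref{the:jason}); it supplies no growth rates, and the proof of Theorem~\ref{the:finfin} makes no appeal to it.

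The paper handles effectiveness by bounding the factors of $p_X$ separately, rather than trying to compute the global $s_0$ of Corollary~\ref{cor:2}. Write $p_X = p_1^{v_1}\cdots p_r^{v_r}\,c$ with $p_1,\ldots,p_r$ the primes dividing $a_k$ and $\gcd(a_k,c)=1$. For the coprime part, $(U_i \bmod c)_{i\ge 0}$ is purely periodic, so $U_0=1$ occurs infinitely often and Proposition~\ref{pro:alternative} gives $c\le d$ directly (sharper than your route through $\alpha$, though that works too). For each prime $p_j$, the chain $N_U(p_j^{v_j})\le \pi_U(p_j^{v_j})\le \pi_U(p_X)\le (N_U(p_X))^k\le d^k$ from Remark~\ref{rem:nupiu} and Proposition~\ref{pro:1} bounds $N_U(p_j^{v_j})$. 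The elementary observation you are missing is that $v\mapsto N_U(p_j^v)$ is non-decreasing: one simply computes $N_U(p_j),N_U(p_j^2),\ldots$ until the value first exceeds $d^k$, and the hypothesis guarantees this search terminates. That index bounds $v_j$, yielding the explicit $P=p_1^{s_1}\cdots p_r^{s_r}\,d$; no $p$-adic growth estimate is ever needed.
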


\begin{proof}
    The sequence $U$ satisfies a recurrence relation of order $k$ of
    the kind \eqref{eq:linrec}.  Let the prime decomposition of $|a_k|$
    be $|a_k|=p_1^{u_1}\cdots p_r^{u_r}$ with $u_1,\ldots,u_r> 0$.
    Consider a DFA $\mathcal{A}$ with $d$ states accepting a
    $U$-recognizable set $X\subseteq\mathbb{N}$. Assume that $X$ is
    ultimately periodic with a period
    $$p_X=p_1^{v_1}\cdots p_r^{v_r}c$$
    where $\gcd(a_k,c)=1$ and
    $v_1,\ldots,v_r\ge 0$.
    
    Since $\gcd(a_k,c)=1$, with the same reasoning as in the proof of
    the previous lemma, the sequence $(U_i\mod c)_{i\ge 0}$ is purely
    periodic. Therefore, $U_0=1$ appears infinitely often in $(U_i\mod
    c)_{i\ge0}$. Since $c$ is a divisor of $p_X$, we can use
    Proposition~\ref{pro:alternative} and get $c\le d$.
    
    By Proposition~\ref{pro:1}, we get $N_U(p_X)\le d$. Let
    $j\in\{1,\ldots,r\}$. Using Remark~\ref{rem:nupiu}, we obtain
    $$
    N_U(p_j^{v_j})\le \pi_U(p_j^{v_j})\le \pi_U(p_X)\le
    (N_U(p_X))^k\le d^k.$$
    The assumption
    $\lim_{m\to+\infty}N_U(m)=+\infty$ implies that
    $\lim_{v\to+\infty}N_U(p_j^v)=+\infty$.  Observe that
    $N_U(p_j^v)\le N_U(p_j^w)$ whenever $v\le w$. Consequently the
    exponent $v_j$ occurring in the decomposition of $p_X$ is bounded
    by $s_j$ where $s_j$ is the smallest integer such that for all
    $v\ge s_j$, $N_U(p_j^v)>d^k$. This bound $s_j$ can be effectively
    computed as follows. For any $v$, $N_U(p_j^v)$ can be found in a
    finite number of operations by inspecting the first values of
    $(U_i\mod p_j^v)_{i\ge 0}$ and looking for two identical
    $k$-tuples made of $k$ consecutive elements.  Once the period is
    determined, one immediately gets the values that are repeated
    infinitely often.  Since the map $v\mapsto N_U(p_j^v)$ is
    non-decreasing, one has to compute $N_U(p_j)\le N_U(p_j^2)\le
    \cdots$ until finding the first value $s_j$ such that
    $N_U(p_j^{s_j})>d^k$.
    
    If $X$ is ultimately periodic, then the admissible periods are
    bounded by the constant
    $$P=p_1^{s_1}\cdots p_r^{s_r}d$$
    which is effectively computable.
    Then, using Proposition~\ref{pro:preper}, the admissible
    preperiods $a_X$ must satisfy $$|\rep_U(a_X-1)|\le d+\max_{p\le
      P}(\iota_U(p))$$
    where $|\rep_U(a)|\le|\rep_U(b)|$ whenever
    $a\le b$. This last observation shows that a bound on the
    admissible preperiods of $X$ can be given effectively.

    Consequently the sets of admissible preperiods and periods that we
    have to check are finite. For each pair $(a,p)$ of admissible
    preperiods and periods, there are at most $2^a2^p$ distinct
    ultimately periodic sets. Thanks to Lemma~\ref{lem:upU}, one can
    build an automaton for each of them and then compare the language
    $L$ accepted by this automaton with $\rep_U(X)$. Recall
    that testing whether $L\setminus\rep_U(X)=\emptyset$ and
    $\rep_U(X)\setminus L=\emptyset$ is decidable
    algorithmically.
\end{proof}

In view of the previous result, it is natural to characterize linear
recurrence sequences $U$ such that $\lim_{m\to+\infty}N_U(m)=+\infty$.
It is exactly the aim of Section~\ref{sec:lrrc}. For instance, we have
the immediate special case that was treated in \cite{CR}.

\begin{corollary}\label{the:the}
    Let $U=(U_i)_{i\ge 0}$ be a linear numeration system such that
    $\mathbb{N}$ is $U$-recognizable and satisfying a recurrence
    relation of order $k$ of the kind \eqref{eq:linrec} with $a_k=\pm
    1$ and condition \eqref{eq:condlim}.
    It is decidable whether or not a $U$-recognizable set is
    ultimately periodic.
\end{corollary}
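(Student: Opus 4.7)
The plan is to obtain Corollary~\ref{the:the} as an immediate specialization of Theorem~\ref{the:finfin}, with the only real work consisting in verifying its hypothesis $\lim_{m\to+\infty}N_U(m) = +\infty$ from the assumption $a_k = \pm 1$.

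First I would point out that under the hypotheses of the corollary, the numeration system $U$ is linear, $\mathbb{N}$ is $U$-recognizable, and condition \eqref{eq:condlim} is assumed to hold. So among the hypotheses of Theorem~\ref{the:finfin}, only $\lim_{m\to+\infty}N_U(m) = +\infty$ remains to be checked.

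Next I would invoke Lemma~\ref{lem:equiv}, which asserts the equivalence between $\lim_{m\to+\infty}N_U(m)=+\infty$ and the statement that $\lim_{v\to+\infty}N_U(p^v)=+\infty$ for every prime divisor $p$ of $a_k$. When $a_k = \pm 1$, the set of prime divisors of $a_k$ is empty, so condition (ii) of Lemma~\ref{lem:equiv} is vacuously satisfied. By the equivalence (which is precisely the content of the ``in particular'' clause of Lemma~\ref{lem:equiv}), condition (i) holds, i.e., $\lim_{m\to+\infty}N_U(m)=+\infty$.

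At this point every hypothesis of Theorem~\ref{the:finfin} is in place, and the conclusion---that it is decidable whether a given $U$-recognizable set is ultimately periodic---follows directly. There is really no obstacle here beyond correctly invoking the two preceding results; the substantive content (bounding the admissible periods via Proposition~\ref{pro:1} and Proposition~\ref{pro:alternative}, bounding the admissible preperiods via Proposition~\ref{pro:preper}, and constructing an automaton for each candidate ultimately periodic set via Lemma~\ref{lem:upU}) has already been carried out in the proof of Theorem~\ref{the:finfin}.
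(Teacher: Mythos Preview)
Your proposal is correct and matches the paper's own proof essentially verbatim: the paper simply observes that when $a_k=\pm 1$, Lemma~\ref{lem:equiv} gives $\lim_{m\to+\infty}N_U(m)=+\infty$, and then Theorem~\ref{the:finfin} applies. Your additional remark that condition~(ii) of Lemma~\ref{lem:equiv} is vacuous when $a_k=\pm 1$ is exactly the content of that lemma's ``in particular'' clause.
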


\begin{proof}
    It is enough to observe that if $a_k=\pm 1$, then
    $\lim_{m\to+\infty}N_U(m)=+\infty$, see Lemma~\ref{lem:equiv}.
\end{proof}

\begin{example}
    Let $\alpha\in (0,1)$ be irrational, with simple continued
    fraction $\alpha=[1,a_1,a,a_3,a,\ldots]$, that is, $a_{2i}=a$
    $(i\ge 1)$, where $a, a_{2i-1}\in\mathbb{N}\setminus\{0\}$ for all
    $i\ge 1$.  The numerators of its even-indexed convergents are
    given by the recurrence $U_{2i}=(aa_{2i-1}+2)U_{2i-2}-U_{2i-4}$
    $(i\ge 1)$ with initial conditions $U_{-2}=1-a$, $U_0=1$. It was
    shown in \cite{Fra} that every positive integer $n$ has a unique
    representation of the form $n=\sum_{i\ge 0} d_{2i}U_{2i}$, where
    the digits satisfy $0\le d_{2i}\le aa_{2i+1}+1$, and the
    additional condition: If for some $0\le k<\ell$ the digits
    $d_{2k}$ and $d_{2\ell}$ attain their maximum values, then there
    exists $j$ satisfying $k<j<\ell$ such that $d_{2j}<aa_{2j+1}$. It
    was recently employed in \cite{SY}, where it was dubbed
    $\ell$-sequences. The special case $a=a_{2i-1}=1$ for all $i\ge 1$
    is an exotic ternary numeration system since then $0\le d_{2i}\le
    2$. The additional condition then states that between any two
    digits $2$ there must be a digit $0$. Also $\alpha=(1+\sqrt{5})/2$
    is the golden ratio and the $U_{2i}$ are the even-indexed
    Fibonacci numbers. This special case was used to investigate
    irregularities of distribution of sequences \cite{CG1},
    \cite{CG2}. In \cite{FK} it was used to compress sparse
    bit-strings, in \cite{Fra2} for answering a combinatorial question
    raised in \cite{BSS}, and in \cite{Fra3} for providing a
    polynomial-time algorithm for combinatorial games. For all these
    numeration systems Problem~\ref{pb} is decidable, since the
    coefficient of $U_{2i-4}$ is $-1$. The same conclusion holds for
    the numeration system based on the denominators of the even
    convergents, also given in \cite{Fra}.
\end{example}

\begin{remark}\label{rem:notint}
    We have thus obtained a decision procedure for our
    Problem~\ref{pb} when $\lim_{m\to+\infty}N_U(m)=+\infty$ and in
    particular when the coefficient $a_k$ occurring in
    \eqref{eq:linrec} is equal to $\pm 1$. On the other hand, whenever
    $\gcd(a_1,\ldots,a_k)=g\ge 2$, for all $n\ge 1$ and for all $i$
    large enough, we have $U_i\equiv 0 \mod g^n$ and the assumption
    about $N_U(m)$ in Theorem~\ref{the:finfin} does not hold.  Indeed,
    the only value taken infinitely often by the sequence $(U_i\mod
    g^n)_{i\ge 0}$ is $0$, so $N_U(m)$ equals $1$ for infinitely many
    values of $m$. Notice in particular, that the same observation can
    be made for the usual $b$-ary system ($b\ge 2$) where the only
    value taken infinitely often by the sequence $(b^i\mod b^n)_{i\ge
      0}$ is $0$, for all $n\ge 1$.
\end{remark}

To conclude this section, we make a small digression. We show how to
use a result of Engstrom about preperiods \cite{Eng} to get some
special linear numeration systems where
$\lim_{m\to+\infty}N_U(m)=+\infty$.  In Engstrom's paper the problem
of finding a general period for a given recurrence relation modulo $m$
for {\em any} initial conditions is considered. Notice that in
\cite{ward}, M.~Ward considers the problem where the initial
conditions are fixed and then the period modulo $m$ has to be
determined.

\begin{theorem}\cite[Theorem 9]{Eng}\label{the:eng}
    Let $U=(U_i)_{i\ge 0}$ be a linear recurrence sequence of order
    $k$ of the kind \eqref{eq:linrec} and $p$ be a prime divisor of
    $a_k$.  If there exists $s(p)<k$ such that
    $a_k,\ldots,a_{k-s(p)+1}\equiv 0\pmod p$ and $a_{k-s(p)}\not\equiv
    0\pmod p$, then $\iota_U(p^v)\le vs(p)$.
\end{theorem}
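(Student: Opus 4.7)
The plan is to realize the recurrence via its companion matrix $A$ acting on $\mathbb{Z}_p^k$, and to decompose this module using a Hensel lift of the characteristic polynomial's factorization modulo $p$. Write $\chi(x)=x^k-a_1x^{k-1}-\cdots-a_k$ for the characteristic polynomial of the recurrence. By hypothesis $a_{k-s+1},\ldots,a_k\equiv 0\pmod p$ while $a_{k-s}\not\equiv 0\pmod p$ (with $s=s(p)$), so modulo $p$ one has $\chi(x)\equiv x^s\,\psi(x)$ with $\psi(x)=x^{k-s}-a_1 x^{k-s-1}-\cdots -a_{k-s}$ satisfying $\psi(0)\neq 0$ in $\mathbb{F}_p$. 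Hence $x^s$ and $\psi(x)$ are coprime in $\mathbb{F}_p[x]$.

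Using Hensel's lemma I would lift this coprime factorization to $\mathbb{Z}_p[x]$: write $\chi(x)=P(x)Q(x)$ with $P$ monic of degree $s$, $P(x)\equiv x^s\pmod p$, and $Q(x)\equiv \psi(x)\pmod p$, and $\gcd(P,Q)=1$ in $\mathbb{Z}_p[x]$. The B\'ezout relation $u(x)P(x)+v(x)Q(x)=1$ over $\mathbb{Z}_p[x]$ then yields, in standard fashion, a direct sum decomposition $\mathbb{Z}_p^k=K\oplus L$ where $K=\ker P(A)$ and $L=\ker Q(A)$, both $A$-invariant $\mathbb{Z}_p$-submodules.

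Next I would analyze $A$ on each summand. Since $Q(0)\equiv\psi(0)\not\equiv0\pmod p$, the constant term of $Q$ is a unit in $\mathbb{Z}_p$; from $Q(A)=0$ on $L$ it follows that $A$ is invertible on $L$. On the other hand, $P(x)$ is monic of degree $s$ with $P(x)\equiv x^s\pmod p$, so $P(x)=x^s+p\cdot p_1(x)$ for some $p_1\in\mathbb{Z}_p[x]$ of degree $<s$. On $K$ we have $P(A)=0$, hence $A^s=-p\cdot p_1(A)$. Iterating (the operators commute), $A^{sv}=(-p)^v\,p_1(A)^v$ on $K$, so $A^{sv}\equiv 0\pmod{p^v}$ on $K$.

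Reducing modulo $p^v$ gives $(\mathbb{Z}/p^v\mathbb{Z})^k=K/p^vK\oplus L/p^vL$. After $sv$ applications of $A$ the $K$-component is annihilated, so $A^{sv}\bigl((\mathbb{Z}/p^v\mathbb{Z})^k\bigr)\subseteq L/p^vL$; and since $A$ acts bijectively on $L/p^vL$, this image is $A$-stable and coincides with the recurrent part of the finite dynamical system. Applying this to the orbit of the initial state vector $(U_0,U_1,\ldots,U_{k-1})$ shows that the sequence of state vectors modulo $p^v$ enters its cycle within $sv$ steps, which yields $\iota_U(p^v)\le sv=vs(p)$. The main obstacle I anticipate is the careful justification of the $\mathbb{Z}_p$-module direct sum $K\oplus L$ from the Hensel lift — it really does require coprimality of $P$ and $Q$ in $\mathbb{Z}_p[x]$, not just modulo $p$ — after which the estimates on $A^{sv}$ on each summand are essentially forced.
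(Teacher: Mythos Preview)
The paper does not give its own proof of this statement: it is quoted verbatim as \cite[Theorem 9]{Eng} and used as a black box in Remark~\ref{rem:exxx}. So there is nothing in the paper to compare your argument against.

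That said, your proof is correct and self-contained. The reduction $\chi(x)\equiv x^s\psi(x)\pmod p$ with $\psi(0)\not\equiv 0$ is exactly what the hypothesis $a_k,\ldots,a_{k-s+1}\equiv 0$, $a_{k-s}\not\equiv 0\pmod p$ says; the Hensel lift to $\chi=PQ$ over $\mathbb{Z}_p[x]$ is legitimate because the resultant of $x^s$ and $\psi$ over $\mathbb{F}_p$ is $\pm\psi(0)^s\neq 0$, hence a unit, which also gives the B\'ezout identity in $\mathbb{Z}_p[x]$ that you need for the module splitting $\mathbb{Z}_p^k=\ker P(A)\oplus\ker Q(A)$. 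The two estimates on $A$ then follow exactly as you describe, and the conclusion $\iota_U(p^v)\le vs(p)$ drops out since the preperiod of $(U_i\bmod p^v)_{i\ge 0}$ equals that of the state-vector sequence $((U_i,\ldots,U_{i+k-1})\bmod p^v)_{i\ge 0}$.

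Engstrom's original 1931 argument predates this language and proceeds by elementary manipulations of the recurrence modulo prime powers, without $p$-adic completions or Hensel lifting. Your companion-matrix/$\mathbb{Z}_p$-module approach is a cleaner modern repackaging: it makes transparent \emph{why} the bound is $vs(p)$ (nilpotence of $A^s$ modulo $p$ on the ``bad'' summand forces $A^{sv}\equiv 0\pmod{p^v}$ there), at the cost of invoking Hensel's lemma for factorizations. Either route yields the same bound; yours generalizes more readily to other local rings.
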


\begin{remark}\label{rem:exxx}
    Assume that we are dealing with a linear numeration system
    $U=(U_i)_{i\ge 0}$ satisfying \eqref{eq:linrec} and that the
    assumptions of the previous theorem hold for all prime divisors
    $p$ of $a_k$ (which is equivalent to the fact that
    $\gcd(a_1,\ldots,a_k)=1$).  Let $\chi_U(x)$ be the characteristic
    polynomial of $U$ as defined in \eqref{eq:polcar}.  Assume that
    $\beta>1$ is a root of multiplicity $\ell\ge 1$ of $\chi_U(x)$
    satisfying:
\begin{itemize}
  \item for any other root $\gamma\in\mathbb{C}$ of $\chi_U(x)$, $|\gamma|<\beta$,
  \item $\beta<p^{1/s(p)}$ for all prime divisors $p$ of $a_k$.
\end{itemize}
There exists some constant $c$ such that $U_i\sim c\,
i^{\ell-1}\beta^i$.  Let $p$ be a prime divisor of $a_k$ and $j_p(v)$
be the largest index $j$ such that $U_{j}<p^v$. Let $t>s(p)$ be a real
number such that $\beta<p^{1/t}<p^{1/s(p)}$. For $v$ large enough, we
have $U_{\lfloor vt\rfloor}<p^v$.  Consequently, for $v$ large enough,
$j_p(v)\ge \lfloor vt\rfloor$. From the previous theorem, we have
$\iota_U(p^v)\le vs(p)$. Therefore for $v$ large enough,
$U_{\iota_U(p^v)}<\cdots <U_{j_p(v)}$ are the first terms of the
periodic part of $(U_i\mod p^v)_{i\ge 0}$ and $N_U(p^v)\ge \lfloor
vt\rfloor -vs(p)+1$.  This means that for all prime divisors $p$ of
$a_k$, $N_U(p^v)\to+\infty$ as $v\to+\infty$.  Therefore, by
Lemma~\ref{lem:equiv}, $\lim_{m\to+\infty}N_U(m)=+\infty$ and we can
apply our decision procedure given by Theorem~\ref{the:finfin}
(whenever $\mathbb{N}$ is $U$-recognizable).
\end{remark}

\begin{example}
    Consider the linear recurrence sequence given by $U_{i+3}=U_{i+1}+3U_i$ for $i\ge
    0$ and $U_i=i+1$ for $i=0,1,2$. The first terms of the sequence are
    $$1, 2, 3, 5, 9, 14, 24, 41, 66, 113, 189, 311, 528, 878, 1461,
    2462, 4095,\ldots.$$
    With the above notation, $\beta\simeq
    1.6717<3$ and the other two complex roots have modulus close to
    $1.34$. We also have $s(3)=1$.  Thanks to Theorem~\ref{the:eng},
    the preperiod $\iota_U(3^v)$ is bounded by $v$. On the other hand,
    we have $U_i\sim c\, \beta^i$ for some $c>0$. Notice that
    $\beta<3^{1/2}<3$. Therefore for $v$ large enough, $U_{2v}\sim c\,
    \beta^{2v}<3^v$. Consequently, the elements $U_{v}<\cdots<U_{2v}$
    appear in the periodic part. In the following table, these
    elements have been underlined.
$$\begin{array}{l||l|l}
v & \text{preperiod} & \text{period} \\
\hline
3 & 1, 2, 3 & (\underline{5, 9, 14, 24}, 14, 12, 5, 0, 14, 15, 14, 3, 5, 18, 14, 6, 14, 21)\\
4 & 1, 2, 3, 5 & (\underline{9, 14, 24, 41, 66}, 32, 27, 68, 42, 68, 3, 32, 45, 41, 60, 14, \ldots)\\
5 & 1, 2, 3, 5, 9 & (\underline{14, 24, 41, 66, 113, 189}, 68, 42, 149, 3, 32, 207, 41, 60, 
176, \ldots) \\
\end{array}$$
\end{example}

\section{Linear Recurrence Sequences and Residue Classes}\label{sec:lrrc}

As was observed in Remark~\ref{rem:notint}, since our approach to
solve the decision problem is requiring that
$\lim_{m\to+\infty}N_U(m)=+\infty$, it can only be applied to linear
recurrence sequences \eqref{eq:linrec} for which
$\gcd(a_1,\ldots,a_k)=1$. In this section, our aim is to determine
which linear recurrence sequences $U$ are such that
$\lim_{m\to+\infty}N_U(m)=+\infty$. To that end, it is clear (see
Lemma~\ref{lem:equiv}) that we have only to focus on the behavior of
$N_U(p^v)$ for any prime $p$ dividing $a_k$.  \medskip

Throughout this section we let $(U_i)_{i\ge 0}$ be a linear recurrence
sequence satisfying \eqref{eq:linrec}. We assume that $(U_i)_{i\ge 0}$
satisfies no recurrence of smaller order than $k$. It is well-known
(this result is sometimes referred as Kronecker's theorem, see
\cite{kob}) that under the assumption $a_k\neq 0$, this is equivalent
to assume that
$$\det
\begin{pmatrix}
    U_0 &\cdots & U_{k-1}\\
    \vdots& &\vdots\\
    U_{k-1}& \cdots & U_{2k-2}\\
\end{pmatrix}\neq 0.$$
\medskip

Let $p$ be a prime number. We recap some background on the $p$-adic
numbers (see for instance \cite{padic}). We can put an absolute value
$|\cdot|_p$ on $\mathbb{Z}$ as follows.  For each integer $n\neq 0$,
we can write $n=p^v \ell$ with $\ell$ such that $\gcd(p,\ell)=1$.  We
define $|n|_p = p^{-v}$ and $|0|_p=0$.  We note that this absolute
value extends on $\mathbb{Q}$ by declaring $|a/b|_p = |a|_p/|b|_p$ for
$a,b\in\mathbb{Z}$, $b\neq 0$. In particular, for all
$a,b\in\mathbb{Q}$, $|a.b|_p=|a|_p.|b|_p$.  Note that this absolute value is
\emph{non-Archimedian}; that is, it satisfies for all
$a,b\in\mathbb{Q}$
$$|a+b|_p \le \max\{ |a|_p,|b|_p\}.$$
If we complete $\mathbb{Q}$ with
respect to this absolute value, we obtain the field of $p$-\emph{adic rationals},
which we denote by $\mathbb{Q}_p$.  We can take the algebraic closure
of $\mathbb{Q}_p$; the absolute value $|\cdot|_p$ extends to this algebraic
closure.  The algebraic closure is not complete, however.  Completing
this algebraic closure, we obtain a complete algebraically closed
field $\mathbb{C}_p$ with absolute value $|\cdot|_p$, which restricts to the
$p$-adic absolute value on $\mathbb{Q}$. The closed unit ball $$\{x\in
\mathbb{Q}_p:|x|_p\le 1\}$$
is called the set of $p$-\emph{adic integers} and
we denote it by $\mathbb{Z}_p$.  The ordinary integers are dense in
$\mathbb{Z}_p$. The $p$-adic rationals can be viewed as the formal expressions of the form 
$$c_{-N}p^{-N}+\cdots+c_{-1}p^{-1}+c_0+c_1p+c_2p^2+\cdots$$
where
$c_j\in\{0,\ldots,p-1\}$, $N\in\mathbb{Z}$. The $p$-adic integers are
identified with formal expressions involving only non-negative powers
of $p$. \medskip

We let
\begin{equation}
    \label{eq:polcar}
    \chi_U(x)=x^k-a_1x^{k-1}-\cdots-a_k
\end{equation}
be the characteristic
polynomial of $(U_i)_{i\ge 0}$ and
\begin{equation} 
P_U(x)= x^k \chi_U(1/x) = 1-a_1x - \cdots - a_k x^k.
\end{equation}
For $a_k\neq 0$, observe that if $\alpha_1,\ldots,\alpha_s$ are the
roots of $\chi_U$, then the reciprocals $1/\alpha_1,\ldots,1/\alpha_s$
are exactly the roots of $P_U$.
\begin{remark}\label{rem:rat}
    If $(U_i)_{i\in\mathbb{N}}$ satisfies \eqref{eq:linrec} with
    initial conditions $U_0,\ldots,U_{k-1}$, then the
    ordinary power series generating function is a rational function:
    $${\sf U}(x):=\sum_{i\ge 0} U_i\, x^i=\frac{\sum_{i=0}^{k-1} U_i \,
      x^i-\sum_{i+j<k} a_i U_j\, x^{i+j}}{P_U(x)}.$$
    The converse also
    holds. The sequence of Taylor coefficients of any rational
    function $R(x)/P_U(x)$ where $R(x)$ is a polynomial of degree less
    than $k$, satisfies \eqref{eq:linrec}. See for instance \cite[p.
    6]{poorten}.
\end{remark}
Our goal is to prove the following result. Notice that similar
developments can be found in \cite{rauzy}.
\begin{theorem}\label{the:jason} We have $N_U(p^v)\not \rightarrow +\infty$ as $v\rightarrow +\infty$ if and only if
    $P_U(x)=A(x)B(x)$ with $A(x),B(x)\in \mathbb{Z}[x]$ such that:
\begin{enumerate}
\item[(i)] $B(x)\equiv 1\ (\bmod\ p\mathbb{Z}[x])$;
\item[(ii)] $A(x)$ has no repeated roots and all its roots are roots of unity.
\end{enumerate}
In that case, we have moreover $A(0)=B(0)=1$.
\end{theorem}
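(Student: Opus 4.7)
The plan is to study, for each prime $p \mid a_k$, the $p$-adic analytic behavior of the generating function $\mathsf{U}(x) = R(x)/P_U(x)$ from Remark~\ref{rem:rat}. By the minimality of $k$ as the order of the recurrence (Kronecker's theorem, used above to set up $P_U$), $R$ and $P_U$ are coprime, so the roots of $P_U$ are precisely the poles of $\mathsf{U}$. Two elementary facts guide the whole argument: first, Remark~\ref{rem:nupiu} gives $N_U(p^v) \le \pi_U(p^v) \le N_U(p^v)^k$, so the growth of $N_U(p^v)$ is essentially the growth of $\pi_U(p^v)$; second, $v \mapsto \pi_U(p^v)$ is non-decreasing (any period modulo $p^{v+1}$ is also a period modulo $p^v$). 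Consequently, ``$N_U(p^v) \not\to +\infty$'' is equivalent to ``$\pi_U(p^v)$ is eventually equal to some fixed integer $\pi$''.

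For the easy implication ($\Leftarrow$), assume $P_U = AB$ as in the statement. Since $P_U(0) = A(0)B(0) = 1$ in $\mathbb{Z}$ and $B(0) \equiv 1 \pmod p$, necessarily $A(0) = B(0) = 1$, so I can write $B(x) = 1 - pE(x)$ for some $E \in \mathbb{Z}[x]$. In $\mathbb{Z}_p[[x]]$ one has
\[
\frac{1}{B(x)} = \sum_{n \ge 0} p^n E(x)^n \equiv \sum_{n=0}^{v-1} p^n E(x)^n \pmod{p^v},
\]
so modulo $p^v$ the generating function reduces to $Q_v(x)/A(x)$ for a polynomial $Q_v \in \mathbb{Z}[x]$. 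Because the roots of $A$ are distinct roots of unity, $1/A(x) \in \mathbb{Z}[[x]]$ has purely periodic Taylor coefficients of period $N = \lcm$ of the orders of these roots; convolving with the polynomial $Q_v$ produces a sequence that is eventually periodic of period dividing $N$. Hence $N_U(p^v) \le N$ uniformly in $v$.

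For the hard implication ($\Rightarrow$), assume $\pi_U(p^v) = \pi$ for all $v$ large and set $G(x) := (1 - x^\pi) \mathsf{U}(x) = (1 - x^\pi) R(x)/P_U(x)$. The coefficient of $x^n$ in $G$ equals $U_n - U_{n-\pi}$ for $n \ge \pi$, and by periodicity of $(U_i \bmod p^v)$ it is divisible by $p^v$ as soon as $n \ge \iota_U(p^v) + \pi$. Therefore $|G_n|_p \to 0$ in $\mathbb{Z}_p$, and by non-Archimedean convergence $G$ extends to a $p$-adic analytic function on the closed unit disk $\{|x|_p \le 1\} \subset \mathbb{C}_p$. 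Since the roots of $\chi_U$ are algebraic integers, they have non-negative $p$-adic valuation, so all roots $\gamma = 1/\beta$ of $P_U$ satisfy $|\gamma|_p \ge 1$; those lying in the closed unit disk are exactly the ones with $|\gamma|_p = 1$. Because $R$ is coprime to $P_U$, the only source of cancellation for such a pole is the factor $(1 - x^\pi)$, whose roots are the simple $\pi$-th roots of unity. Comparing multiplicities then forces each root $\gamma$ of $P_U$ with $|\gamma|_p = 1$ both to satisfy $\gamma^\pi = 1$ and to be a simple root of $P_U$.

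It remains to descend the factorization from $\mathbb{Q}_p$ to $\mathbb{Z}$. I define $A(x) = \prod (1 - \beta_i x)$, the product taken over those roots $\beta_i$ of $\chi_U$ which are themselves roots of unity, and $B(x) = P_U(x)/A(x)$. Because every root of unity satisfies $|\zeta|_p = 1$ at every prime simultaneously, the set of roots-of-unity roots of $P_U$ is Galois-stable over $\mathbb{Q}$, so $A \in \mathbb{Z}[x]$; Gauss's lemma then places $B \in \mathbb{Z}[x]$, and both have constant term $1$ because $P_U(0) = 1$. The congruence $B(x) \equiv 1 \pmod{p \mathbb{Z}[x]}$ is immediate from the fact that each factor $(1 - \beta_j x)$ of $B$ over $\mathbb{C}_p$ has $v_p(\beta_j) > 0$. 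The main obstacle I foresee is the pole-cancellation step in the hard direction, and in particular tracking multiplicities carefully so as to simultaneously exclude non-cyclotomic roots on the unit circle and repeated unit roots of $P_U$.
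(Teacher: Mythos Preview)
Your proof is correct, and for the hard direction it is genuinely simpler than the paper's.

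For the implication $(\Leftarrow)$ the two arguments coincide: both expand $1/B(x)=1/(1-pB_1(x))$ as a geometric series in $p$ and use that $A(x)\mid (x^N-1)$ for a suitable $N$.

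For the implication $(\Rightarrow)$ the approaches diverge. The paper works with $N_U$ directly: it takes $p$-adic limits $b_1,\ldots,b_d$ of the residues occurring infinitely often, forms the Hadamard product $V_i=\prod_{j=1}^d(U_i-b_j)$, and shows $|V_i|_p\to 0$. To locate the poles of $\mathsf{U}$ from those of $\mathsf{V}$ it needs a separate technical lemma (Lemma~\ref{lem:free}) about the case where the poles of $\mathsf{U}$ generate a \emph{free} Abelian group, and it removes torsion by passing to the subsequences $U^{(b)}_i=U_{ai+b}$. You bypass all of this by first translating the hypothesis via Remark~\ref{rem:nupiu}: since $v\mapsto\pi_U(p^v)$ is non-decreasing, $N_U(p^v)\not\to+\infty$ forces $\pi_U(p^v)$ to be eventually equal to a fixed $\pi$, and then the single rational function $G(x)=(1-x^\pi)\,\mathsf{U}(x)$ does all the work: its coefficients $U_n-U_{n-\pi}$ tend to $0$ $p$-adically, so $G$ has no pole on the closed unit disc of $\mathbb{C}_p$; since $R$ and $P_U$ are coprime, every root of $P_U$ with $|\gamma|_p=1$ must be a (simple) root of $1-x^\pi$. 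Your route avoids the auxiliary sequence $V_i$, the free-group lemma, and the subsequence trick entirely; what the paper's machinery buys is a framework closer to $N_U$ itself, but at a noticeably higher cost.

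One cosmetic remark: when you deduce $A(0)=B(0)=1$ from $A(0)B(0)=1$ and $B(0)\equiv 1\pmod p$, note that for $p=2$ this does not by itself exclude $A(0)=B(0)=-1$; in that case simply replace $(A,B)$ by $(-A,-B)$, which still satisfies (i) and (ii).
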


\begin{proof}
    We note that one direction is fairly simple. Assume that $P_U(x)$ has such
    a factorization $P_U(x)=A(x)B(x)$. From (ii), there is a natural number
    $d$ such that $A(x)|(x^d-1)$.  In view of Remark~\ref{rem:rat},
    there exist some polynomials $Q(x), R(x)$ such that 
    $$(x^d-1)\ \sum_{i\ge0} U_i\, x^i\ = \ \frac{(x^d-1)\, Q(x)}{P_U(x)}\ 
    = \ \frac{(x^d-1)\, Q(x)}{A(x)B(x)} \ = \ \frac{R(x)}{B(x)}.$$
    By
    assumption (i), there exists some integer polynomial $B_1(x)$
    such that $B(x)=1-pB_1(x)$.  Hence
    $$(x^d-1)\sum_{i\ge 0} U_i\, x^i\ = \ \frac{R(x)}{1-pB_1(x)} \ = \ \sum_{i\ge 0}
    p^i\, R(x)\, B_1(x)^i.$$
    In particular, for any fixed integer $v$,
    $(x^d-1)\sum_{i\ge 0} U_i\, x^i$ is congruent to a polynomial (mod
    $p^v$).  This means that $U_{i+d}\equiv U_i~(\bmod~p^v)$ for all
    $i$ sufficiently large.  In particular, there are at most $d$
    values which can occur infinitely often mod $p^v$; that is,
    $N_U(p^v)\le d$ for every $v$.

\vskip .25cm

To do the other direction is a little more work and we use $p$-adic
methods.  We first note that $v\mapsto N_U(p^v)$ is a non-decreasing function,
i.e.,
\begin{equation}\label{eq:nondec}
N_U(p^w) \ge N_U(p^v) \textnormal{ whenever } w\ge v.
\end{equation}
In particular, if $N(p^v)\not\rightarrow+\infty$ then there is some $d$
such that $N_U(p^v)=d$ for all $v$ sufficiently large.  We can pick
integers $a_{1,v},\ldots ,a_{d,v}$ such that if $U_i\equiv a~(\bmod
~p^v)$ for infinitely many $i$, then $a\equiv a_{j,v}~(\bmod ~p^v)$
for some $j\in\{1,\ldots,d\}$.  Since
$$\{(a_{1,w} \bmod p^v),\ldots , (a_{d,w} \bmod p^v)\} = \{(a_{1,v}
\bmod p^v),\ldots , (a_{d,v} \bmod p^v)\}$$
for $w\ge v$, there is no
loss of generality to assume that
$$a_{j,w}\equiv a_{j,v}~(\bmod ~p^v)$$
for $w\ge v$, $1\le j\le d$.
It follows that for $1\le j\le d$ the sequence $(a_{j,v})_{v\ge 1}$
is Cauchy in $\mathbb{Z}_p$.  Thus there exist $b_1,\ldots ,b_d\in
\mathbb{Z}_p$ such that for $1\le j\le d$, 
\begin{equation}\label{eq:vn}
a_{j,v}\rightarrow b_j \text{ as }v\rightarrow+\infty.
\end{equation}
Let 
$$V_i \ = \ \prod_{j=1}^d (U_i-b_j) \in \mathbb{Z}_p.$$
Note that
since the set of linear recurrence sequences is closed under (Hadamard)
product, sum, and scalar multiplication (see for instance
\cite{poorten}), the sequence $(V_i)_{i\ge 0}$ satisfies a linear
recurrence over $\mathbb{Z}_p$.  By construction, $V_i$ is eventually
in $p^v\mathbb{Z}_p$ for any fixed $v$, since any values of $U_i$ that
are not congruent to one of $b_1,\ldots ,b_d$ mod $p^v$ can only occur
finitely many times.  That is for any $v$, $|V_i|_p\le p^{-v}$ for $i$
sufficiently large.  Hence $|V_i|_p\rightarrow 0$ as $i\rightarrow
+\infty$.  Since $(V_i)_{i\ge 0}$ satisfies a linear recurrence, the
power series
$${\sf V}(x) := \sum_{i\ge 0} V_i x^i$$
is a rational power series in
$\mathbb{Q}_p(x)$.  Moreover, ${\sf V}(x)$ converges on the closed
unit disc $\mathbb{Z}_p$, since $|V_i|_p\rightarrow 0$ (in
non-Archimedian fields this is enough to guarantee convergence: a
series $\sum_{i\ge 0} \gamma_i$ converges in $\mathbb{Q}_p$ if and only if
$\lim_{i\to+\infty}|\gamma_i|_p=0$).  Since ${\sf V}(x)$ is a rational
series and it converges on the unit disc, its poles $\beta_1,\ldots
,\beta_r\in \mathbb{C}_p$ must satisfy $|\beta_j|_p>1$, for $1\le j\le
r$.
\medskip

To finish the proof, we will make use of Lemma~\ref{lem:free} stated
below. For the sake of clarity, we have separated this technical
result from this proof.  We note that the statement of this lemma is
very close to what we already have, but it makes the additional
assumption that the poles of ${\sf U}(x)$ generate a free Abelian
subgroup. In general, the poles of ${\sf U}(x)$ generate a finitely
generated Abelian subgroup of $\mathbb{C}_p^{\times}$. From the
so-called {\it fundamental theorem of finitely generated Abelian
  groups} (see for instance \cite[p.  141]{bas}), this group is
isomorphic to $\mathbb{Z}^e \times T$, for some finite Abelian group
$T$ and integer $e\ge 0$.

Let us show how to get rid of the torsion group $T$ to be able to
invoke Lemma~\ref{lem:free}.  Let $a=\# T$. For $0\le b<a$, instead of
taking the sequence $(U_i)_{i\ge 0}$, consider the sequence
$(U^{(b)}_i)_{i\ge 0}:=(U_{ai+b})_{i\ge 0}$.  This latter sequence
satisfies a linear recurrence and the poles of the generating function
${\sf U}^{(b)}(x)$ of $(U^{(b)}_i)_{i\ge 0}$ are the $a$'th powers\footnote{If we
  consider the exponential sum $U_i=\sum_{j=1}^s q_j(i)\, \alpha_j^i$
  where the $q_j$'s are polynomials in $\mathbb{C}_p[x]$ and the
  $\alpha_j$'s are the roots of the characteristic polynomial of
  $(U_i)_{i\ge 0}$, i.e., the reciprocals of the poles of ${\sf U}(x)$,
  then we obtain that $U_i^{(b)}=U_{ai+b}=\sum_{j=1}^s
  \alpha_j^bP_j(ai+b)\, (\alpha_j^a)^i$. Hence the poles of ${\sf
    U}^{(b)}(x)$ are the $a$'th powers of the poles of ${\sf U}(x)$.}
of the poles of ${\sf U}(x)$.  Consequently, the poles of ${\sf
  U}^{(b)}(x)$ generate a finitely generated torsion-free Abelian
group, which is necessarily a free Abelian group.

Since the poles $\beta_1,\ldots ,\beta_r\in \mathbb{C}_p$ of ${\sf
  V}(x)$ satisfy $|\beta_j|_p>1$, for $1\le j\le r$, with the same
reasoning, we get that the poles of the rational function
$${\sf V}^{(b)}(x)=\sum_{i\ge 0} V_{ai+b}\, x^i$$
are $a$'th powers of
$\beta_1,\ldots ,\beta_r$ and for $1\le j\le r$,
$|\beta_j^a|_p=|\beta_j|_p^a>1$. We can invoke Lemma \ref{lem:free}
applied to the sequence $(U^{(b)}_i)_{i\ge 0}$ (it satisfies some
linear recurrence sequence with integer coefficients), and we deduce
that any pole $\gamma$ of ${\sf U}^{(b)}(x)$ satisfies either
$|\gamma|_p>1$ or $\gamma=1$.  In particular, the distinct poles
$1/\alpha_1,\ldots,1/\alpha_s$ of ${\sf U}(x)$, with $\alpha_1,\ldots
,\alpha_s\in \mathbb{C}_p$, being $a$'th roots of the poles of ${\sf
  U}^{(b)}(x)$, are either such that $|1/\alpha_j|_p>1$ or roots of
unity.  By minimality of the order $k$ of the recurrence satisfied by
$(U_i)_{i\ge 0}$, the poles of ${\sf U}(x)$ are precisely the roots of
$P_U(x)$.  We factor
$$P_U(x)=(1-\delta_1 x)\cdots (1-\delta_k x)$$
(each $\delta_j$ is one
of $\alpha_1,\ldots ,\alpha_s$, although they may be repeated). Let us
factor $P_U(x)$ as $A(x)B(x)$ where
$$A(x)=\prod_{\{j:|\delta_j|_p=1\}} (1-\delta_j x)\quad \text{ and
}\quad B(x)=\prod_{\{j:|\delta_j|_p<1\}} (1-\delta_j x).$$
By
assumption $P_U(x)\in \mathbb{Z}[x]$. Moreover if $K$ is a splitting
field of $P_U(x)$ over $\mathbb{Q}$ then any automorphism of $K$ must
permute the set of $\delta_j$ with $|\delta_j|_p<1$, since the
automorphism permutes the entire set of $\delta_j$'s and it must send
roots of unity to roots of unity.  Thus $B(x)$ is a rational
polynomial, since it is fixed by every automorphism of $K$.
 Note that if $n>0$, then the coefficient of
$x^n$ in $B(x)$ is given by a sum of products of $n$ elements in
$\{\delta_j:|\delta_j|_p<1\}$.  The set of algebraic integers is a
subring of $\mathbb{C}_p$ and the only rationals that are algebraic
integers are in fact integers.  Since the $\delta_j$'s are algebraic
integers, $B(x)$ is thus an integer polynomial.  Moreover, since the
$p$-adic absolute value is non-Archimedian, the coefficient of $x^n$
in $B(x)$, $n>0$, has $p$-adic absolute value strictly less than $1$.
Note that an integer $m$ satisfying $|m|_p<1$ is necessarily a
multiple of $p$.  Hence $B(x)\equiv 1\bmod p\mathbb{Z}[x]$.

Let us turn to the polynomial $A(x)$. The roots
of $A(x)$ are roots of unity.  Moreover, $A(x)\in\mathbb{Z}[x]$ by the
same reasoning as before.

To finish the proof, we have one last thing to show: we need to know
that $A(x)$ has no repeated roots.  To do this, we show that the poles
of ${\sf U}(x)$ that are roots of unity are simple.  Recall that
$1/\alpha_1,\ldots,1/\alpha_s$ are the distinct poles of ${\sf U}(x)$.
We may assume that there exists $t\ge 0$ such that $\alpha_1,\ldots
,\alpha_t$ are roots of unity and that for $j=t+1,\ldots,s$,
$|\alpha_j|_p<1$.  Then there exist polynomials
$q_j\in\mathbb{C}_p[x]$, $j=1,\ldots,s$, such that we have for all
$i$,
$$U_i \ = \ \sum_{j=1}^s q_j(i)\, \alpha_j^i=\underbrace{\sum_{j=1}^t
  q_j(i)\, \alpha_j^i}_{:=T_i} +\underbrace{\sum_{j=t+1}^s q_j(i)\,
  \alpha_j^i}_{:=W_i}.$$
Since for $j=t+1,\ldots,s$, $|\alpha_j|_p<1$,
we get that $|U_i-T_i|_p=|W_i|_p\rightarrow 0$ as
$i\rightarrow+\infty$. Indeed, for any $j$,
$\{|q_j(i)|_p:i\in\mathbb{N}\}$ is bounded by a constant.  Since for
$1\le j\le t$, $\alpha_j$ is a root of unity, there exists a natural
number $a$ such that $\alpha_j^a=1$ for all $j\in\{1,\ldots,t\}$.  As
before, we let $T^{(b)}_i = T_{ai+b}$ for $0\le b<a$. Thus $$T^{(b)}_i
\ = \ \sum_{j=1}^t q_j(ai+b)\, \alpha_j^{ai+b}\ = \ \sum_{j=1}^t
\alpha_j^{b}\, q_j(ai+b)$$
is a polynomial with coefficients in
$\mathbb{C}_p$ denoted by $g_b(i)$.

Let $\epsilon>0$. By definition of the $b_j$'s (see \eqref{eq:vn}
above) for all large enough $i$, there exists
$\ell(i)\in\{1,\ldots,d\}$ such that $|U_i-b_{\ell(i)}|_p<\epsilon$.
Since $|U_i-T_i|_p\rightarrow 0$ as $i\to+\infty$, we get
$$|(T_i-b_1)\cdots (T_i-b_d)|_p \le \prod_{j=1}^d
(|T_i-U_i|_p+|U_i-b_{\ell(i)}|_p+|b_{\ell(i)}-b_j|_p) \rightarrow 0$$
as
$i\rightarrow+\infty$ because every factor is bounded by a constant and
one tends to zero.  Thus for $0\le b<a$, as $T_i^{(b)}=g_b(i)$, we have 
$$|(g_b(i)-b_1)\cdots (g_b(i)-b_d)|_p\rightarrow 0 \text{ as
}i\rightarrow +\infty.$$
Consider the polynomial defined by
$$h_b(i):=(g_b(i)-b_1)\cdots (g_b(i)-b_d).$$
Let $n_0$
be a natural number.  Notice that by the Binomial Theorem, we get
$|h_b(n_0+p^v ) -h_b(n_0)|_p\le  C p^{-v}\to 0$ as $v\rightarrow+\infty$, where $C$ is the maximum of the $p$-adic absolute values of the coefficients of $h_b$. So since
$|h_b(i)|_p\rightarrow 0$ as $i\rightarrow+\infty$, we
get\footnote{Indeed, $|h_b(n_0)|_p\le |h_b(n_0)-h_b(n_0+p^v
  )|_p+|h_b(n_0+p^v)|_p$ and both terms tend to $0$ as $v\to+\infty$.}
that $h_b(n_0)=0$.  Thus for all integer $i$ each $h_b(i)=0$, $0\le
b<a$, and so each $g_b$ is a constant polynomial (this is a
consequence of the {\em Fundamental Theorem of Algebra}, $h_b(i)=0$
for all $i$ implies that $g_b$ takes infinitely often the same value
amongst $b_1,\ldots,b_d$ so it is constant).  It follows that $T_i$ is
periodic:
  $$\forall i\ge 0,\ T_{i+a}=T_i.$$
  So $(x^a-1){\sf U}(x)$ has no poles on
  the closed unit disc so in particular on the unit circle, as
  $$|U_{i+a}-U_{i}|_p\le
  |T_{i+a}-T_{i}|_p+|W_{i+a}-W_{i}|_p\rightarrow 0$$
  as
  $i\rightarrow+\infty$.  We know that ${\sf U}(x)=Q(x)/P_U(x)$ with
  $Q$ and $P_U$ relatively prime (by minimality assumption on order
  of recurrence satisfied by $(U_i)_{i\ge 0}$).  Consequently, $A(x)$
  divides $(x^a-1)$ because $(x^a-1)Q(x)/P_U(x)$ has no poles on the
  unit circle.  This shows that it has no repeated roots, completing
  the proof.
\end{proof}

Let us consider the technical lemma used in the previous proof.
\begin{lemma}\label{lem:free}
    Let $(U_i)_{i\ge 0}$ be an integer linear recurrence sequence
    satisfying \eqref{eq:linrec} and let ${\sf U}(x) \ = \ \sum_{i\ge
      0} U_i x^i \in \mathbb{Z}_p[[x]]$ be the corresponding rational
    power series.  Suppose that the multiplicative subgroup of
    $\mathbb{C}_p^{\times}$ generated by the (finitely many) poles of
    ${\sf U}(x)$ is a free Abelian group and let $V_i = \prod_{j=1}^d
    (U_i-b_j)$ for $i\ge 0$ where $b_1,\ldots ,b_d\in \mathbb{Z}_p$.
    If the rational power series ${\sf V}(x) \ = \ \sum_{i\ge 0} V_i
    x^i \in \mathbb{Z}_p[[x]]$ has poles $\beta_1,\ldots ,\beta_r\in
    \mathbb{C}_p$ satisfying $|\beta_j|_p>1$ for $1\le j\le r$, then
    every pole $\gamma\in \mathbb{C}_p$ of ${\sf U}(x)$ either
    satisfies $|\gamma|_p>1$ or $\gamma=1$.
\end{lemma}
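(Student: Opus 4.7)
The plan is to expand $V_i$ as a $p$-adic exponential sum and, for each candidate ``bad'' $\alpha$, exhibit an unavoidable pole of ${\sf V}(x)$ that contradicts the hypothesis. I write $U_i = \sum_{j=1}^s q_j(i)\alpha_j^i$ with distinct $\alpha_j\in\mathbb{C}_p^\times$ (the reciprocals of the poles of ${\sf U}(x)$) and polynomials $q_j\in\mathbb{C}_p[x]$, none identically zero by the minimality of the order of the recurrence. Expanding $V_i=\prod_{m=1}^d(U_i-b_m)$ and regrouping by group element in $G$ yields an expression $V_i=\sum_\gamma P_\gamma(i)\gamma^i$, where $\gamma$ ranges over elements $\prod_j\alpha_j^{e_j}$ with $e_j\ge 0$ and $\sum_j e_j\le d$, and each $P_\gamma$ is a polynomial. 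The poles of ${\sf V}(x)$ are then the reciprocals of those $\gamma$ for which $P_\gamma\not\equiv 0$.

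The first and critical step is to handle ``extremal'' $\alpha_j$'s via an order argument. I fix an isomorphism $G\cong\mathbb{Z}^r$ and write $c_j\in\mathbb{Z}^r$ for the image of $\alpha_j$. Let $\Pi=\mathrm{conv}\{0,c_1,\ldots,c_s\}\subseteq\mathbb{R}^r$ and suppose $c_v$ is a vertex of $\Pi$ with $c_v\ne 0$. I choose a linear functional $\phi\colon\mathbb{R}^r\to\mathbb{R}$ strictly separating $c_v$ from the rest of $\Pi$, so that $\phi(c_v)>0$ and $\phi(c_v)>\phi(c_k)$ for all $k\ne v$, and extend $\phi$ (with lexicographic tie-breaking if needed) to a total order on $G$ compatible with multiplication. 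In this order, $\alpha_v>1$ and $\alpha_v>\alpha_k$ for $k\ne v$, so for any multiexponent with $\sum e_j\le d$ we have $\prod_j\alpha_j^{e_j}\le\alpha_v^d$ with equality (as elements of $G$) only when $e_v=d$ and the other $e_j$ vanish. Hence $P_{\alpha_v^d}(i)=q_v(i)^d\not\equiv 0$, so $1/\alpha_v^d$ is a pole of ${\sf V}(x)$, and the hypothesis forces $|\alpha_v|_p<1$.

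The second step extends this to every $\alpha_j\ne 1$. Since $c_j\in\Pi$, I write $c_j$ as a convex combination $\lambda_0\cdot 0+\sum_v\lambda_v c_v$ over the vertices of $\Pi$. Applying the $\mathbb{R}$-linear extension of the valuation $v_p$, I obtain $v_p(\alpha_j)=\sum_v\lambda_v\,v_p(\alpha_v)$. By the first step, $v_p(\alpha_v)>0$ for every nonzero vertex $c_v$; since $c_j\ne 0$, at least one $\lambda_v$ with $v\ne 0$ is strictly positive, forcing $v_p(\alpha_j)>0$, i.e., $|\alpha_j|_p<1$. Because every pole of ${\sf U}(x)$ has the form $1/\alpha_j$, it is either equal to $1$ (when $\alpha_j=1$) or has $p$-adic absolute value strictly greater than $1$, as required.

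The main obstacle is that, although $G$ is free abelian, the generators $\alpha_j$ need not form a basis, so a priori distinct multiexponents can produce the same group element and cancellations in $P_\gamma$ could occur. The order-based extremality argument sidesteps this: in a suitably chosen compatible total order on $G$, the element $\alpha_v^d$ is the strict maximum among all candidate bases, which both forces the representation $e_v=d$ to be unique and identifies the coefficient as the nonvanishing pure power $q_v(i)^d$, regardless of any multiplicative relations elsewhere among the generators.
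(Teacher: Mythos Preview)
Your argument is correct and reaches the same conclusion as the paper, but the route is genuinely different. The paper first proves the a priori bound $|\alpha_j|_p\le 1$ for every reciprocal pole (using $P_U(1/\alpha_j)=0$ and the ultrametric inequality), then restricts attention to the subgroup generated by those $\alpha_j$ with $|\alpha_j|_p=1$; it shows this subgroup is trivial by choosing coordinates in a free basis and picking a lexicographically extremal $\alpha_1$ so that $\alpha_1^d$ cannot be rewritten as any other length-$\le d$ word, forcing $\alpha_1^{-d}$ to be a pole of ${\sf V}(x)$ of absolute value $1$ and contradicting the hypothesis. You instead work with the full group $G$ generated by all the $\alpha_j$, embed it in $\mathbb{Z}^r$, and argue with the polytope $\Pi=\mathrm{conv}\{0,c_1,\ldots,c_s\}$: each nonzero vertex $c_v$ is separated by a linear functional $\phi$, and the identity
\[
\phi\Bigl(dc_v-\sum_j e_j c_j\Bigr)=\sum_{j\ne v}e_j\bigl(\phi(c_v)-\phi(c_j)\bigr)+\Bigl(d-\sum_j e_j\Bigr)\phi(c_v)
\]
shows $\alpha_v^d$ is uniquely represented, whence $P_{\alpha_v^d}=q_v^d\not\equiv 0$ and $v_p(\alpha_v)>0$; then the linearity of $v_p$ across a convex combination of vertices propagates strict positivity to every $\alpha_j\ne 1$. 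Your approach is a bit slicker in that it never needs the preliminary bound $|\alpha_j|_p\le 1$, and the convex-hull picture makes the ``extremality implies uniqueness'' step transparent for every vertex simultaneously, whereas the paper's coordinate relabelling is tailored to producing a single bad pole. On the other hand, the paper's version localises the contradiction to the unit-circle part of the spectrum, which is conceptually closer to how the lemma is used in the proof of Theorem~\ref{the:jason}. Both arguments hinge on exactly the same mechanism: freeness of $G$ guarantees an order (or functional) in which some $\alpha^d$ is an isolated maximum, so its coefficient $q^d$ survives uncancelled.
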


\begin{proof}
    Let $1/\alpha_1,\ldots,1/\alpha_s$ be the distinct poles of ${\sf
      U}(x)$ with $\alpha_1,\ldots,\alpha_s\in\mathbb{C}_p$. Note that
    $0$ cannot be a pole of ${\sf U}$.  We first claim that
    $|\alpha_j|_p\le 1$ for $1\le j\le s$.  To see this, notice that
    to be a pole, each $1/\alpha_j$ must satisfy $P_U(1/\alpha_j)=0$.
    This means that $$1-a_1/\alpha_i - \cdots -a_k/\alpha_i^k=0.$$
    Consequently, for all $j\in\{1,\ldots,s\}$, we have
    $$\left|a_1/\alpha_j + \cdots + a_k/\alpha_j^k\right|_p =
    |1|_p=1.$$
    By the non-Archimedian property we have
    $|a_\ell/\alpha_j^\ell|_p\ge 1$ for some $\ell\in\{1,\ldots,k\}$.
    Thus $|\alpha_j|_p^\ell \le |a_\ell|_p$.  Since $a_\ell\in
    \mathbb{Z}$, $|a_\ell|_p\le 1$.  This gives the claim and so
    $|\alpha_j|_p\le 1$ for $1\le j\le s$.  We may assume that
    $|\alpha_1|_p=\cdots = |\alpha_t|_p=1$ and $|\alpha_j|_p<1$ for
    $0\le t<j\le s$. So, we know that the poles of ${\sf U}$ have $p$-adic
    absolute value $\ge 1$.
    
    Next there exist polynomials $q_1(x),\ldots ,q_s(x)\in
    \mathbb{C}_p[x]$ such that
    $$U_i \ = \ \sum_{j=1}^s q_j(i)\alpha_j^i.$$
    Moreover, define 
    $$V_i:=(U_i-b_1)\cdots (U_i-b_d) = c_d U_i^d + c_{d-1}
    U_i^{d-1}+\cdots + c_0$$
    for some $c_0,\ldots ,c_{d-1}, c_d=1\in
    \mathbb{Z}_p$.  Hence by Multinomial Theorem, we have
\begin{eqnarray*}
V_i &=& \sum_{j=0}^d c_j \sum_{\substack{j_1+\cdots +j_s=j\\ j_1,\ldots,j_s\ge 0}} {j\choose j_1,\ldots , j_s}\prod_{\ell=1}^s (q_{\ell}(i)\alpha_{\ell}^i)^{j_{\ell}}\\
&=&  \sum_{j=0}^d c_j \sum_{\substack{j_1+\cdots +j_s=j\\ j_1,\ldots,j_s\ge 0}} {j\choose j_1,\ldots , j_s}\prod_{\ell=1}^s q_{\ell}(i)^{j_{\ell}}\left( \prod_{\ell=1}^s \alpha_{\ell}^{j_{\ell}}\right)^i.
\end{eqnarray*}
Since the roots of the characteristic polynomial are the reciprocals
of the poles of the corresponding rational power series, it follows
that the set of poles of ${\sf V}(x)$ is contained in the set
$$\left\{ \prod_{\ell=1}^s
    \alpha_{\ell}^{-j_{\ell}}~:~j_1,\ldots,j_s\ge 0,\ j_1+\cdots
    +j_s\le d\right\}.$$
By assumption, the poles of ${\sf V}(x)$ all
have $p$-adic absolute value strictly greater than $1$.  Note that
$\left|\prod_{\ell=1}^r \alpha_{\ell}^{-j_{\ell}}\right|_p =
\left|\prod_{\ell=1}^t \alpha_{\ell}^{-j_{\ell}}\right|_p
\left|\prod_{\ell=t+1}^r \alpha_{\ell}^{-j_{\ell}}\right|_p$ is $>1$
if and only if $j_{\ell}>0$ for some $\ell>t$.  Therefore, we can
conclude that the possible poles of ${\sf V}$ supported only by
products on $\alpha_1,\ldots,\alpha_t$ do not occur.  So for instance,
$\alpha_1^d$ is not a pole of ${\sf V}$.

Let $G$ denote the multiplicative subgroup of $\mathbb{C}_p^{\times}$
generated by $\alpha_1,\ldots ,\alpha_t$.  By assumption, $G$ is a
subgroup of a finitely generated free Abelian group, and hence $G\cong
\mathbb{Z}^{e}$, for some natural number $e\ge 0$. To conclude the
proof, it is enough to show that $e=0$. Because in that case, one can
conclude that the only possible pole $\gamma$ of ${\sf U}$ such that
$|\gamma|_p=1$ is $1$. To see this, suppose that $e>0$ and let
$\gamma_1,\ldots ,\gamma_e$ be generators for a free Abelian group of
rank $e$. Then for $1\le i\le t$, we can write
$$\alpha_i = \prod_{j=1}^e \gamma_j^{b_{i,j}},$$
where $b_{i,j}$ are
integers.  We relabel if necessary so that
 \begin{itemize}
 \item 
 $|b_{1,1}|=\max\{|b_{i,1}|\}>0$; 
 \item $|b_{1,j}|=\max\{|b_{i,j}|~:~b_{i,\ell}=b_{1,\ell} \textnormal{ for }\ell<j\}$.
 \end{itemize}
 By construction, $\alpha_1^d$ cannot be written as a different word
 in $\alpha_1,\ldots ,\alpha_t$ of length at most $d$.  Then the
 expression for $V_i$ above has an occurrence of
 $$c_d q_1(i)^{d} \alpha_1^{di}$$
 that cannot be canceled by any
 other pole, by our selection of $\alpha_1$.  Consequently,
 $\alpha_1^d$ should be a pole of ${\sf V}$ which contradicts the
 conclusion of the previous paragraph. So, $e=0$ and the result
 follows.
\end{proof}

As it was stressed in the introduction of this section, we are now
able to check, whether or not the number of values taken infinitely
often by a linear recurrence sequence modulo $p^v$ tends to infinity
as $v\to+\infty$. Let us consider the following example.

\begin{example}\label{exa:aaa}
    Consider the linear recurrence sequence $(U_i)_{i\ge 0}$ given in
    \cite{FroSeq} and defined by $U_{i+4}=3U_{i+3}+2U_{i+2}+3U_{i}$
    for $i\ge 0$ and $U_i=i+1$ for $i=0,\ldots,3$. As shown in
    \cite{FroSeq}, addition within this linear numeration system is
    not computable by a finite automaton. Nevertheless, we can show by
    applying the previous theorem that $N_U(3^v)\to+\infty$ as
    $v\to+\infty$. One has
    $$P_U(x)=1-3x-2x^2-3x^4$$
    and it is not difficult to see that this
    polynomial cannot be factorized as $A(x)B(x)$ with two factors
    satisfying hypotheses of Theorem~\ref{the:jason}. This example
    shows that our decision procedure given by
    Theorem~\ref{the:finfin} can take care of numeration systems not
    handled by \cite{ARS,Leroux,much}.
\end{example}

\begin{example}
    Consider the recurrence relation
    $$U_{i+5} = 6 U_{i +4} + 3U_{i+3} - U_{i+2} + 6U_{i+1} +
    3U_{i},\ \forall i\ge 0.$$
    With the above notation,
    $\chi_U(x)=x^5-6x^4-3x^3+x^2-6x-3$ and
    $$P_U(x)=1-6x-3x^2+x^3-6x^4-3x^5=\underbrace{(x^3 +
      1)}_{A(x)}\underbrace{(-3x^2 - 6x + 1)}_{B(x)}.$$
    With the
    initial conditions $U_i=i+1$ for $i=0,\ldots,4$, the corresponding
    sequence does not satisfy any relation of shorter length as
    $$\det
\begin{pmatrix}
1& 2& 3& 4& 5\\ 
2& 3& 4& 5& 54\\
3& 4& 5& 54& 359\\
4& 5& 54& 359& 2344\\
5& 54& 359& 2344& 15129\\
\end{pmatrix}=8458240\neq 0.$$
Even if the gcd of the coefficients of the recurrence is $1$, since
$P_U$ satisfies the assumptions of Theorem~\ref{the:jason} for $p=3$,
$N_U(3^v)\not\to+\infty$ as $v\to+\infty$. The following table gives
the first values of $N_U(3^v)$.
$$\begin{array}{l||l|c}
 v & \text{period} & N_U(3^v) \\
\hline
1 &   (1, 0, 1, 2, 0, 2) & 3 \\
2 & (4, 0, 1, 5, 0, 8) & 5 \\
3 & (22, 9, 19, 5, 18, 8) & 6\\
4 & (49, 63, 19, 32, 18, 62) & 6\\
5 & (211, 225, 19, 32, 18, 224) & 6\\
\vdots & & \vdots\\
\end{array}$$
\end{example}

\section{A Decision Procedure for a Class of Abstract Numeration Systems}\label{sec:last}

An {\em abstract numeration system} $S=(L,\Sigma,<)$ is given by an
infinite regular language $L$ over a totally ordered alphabet
$(\Sigma,<)$ \cite{LR}. By enumerating the words of $L$ in
genealogical order, we get a one-to-one correspondence denoted
$\rep_S$ between $\mathbb{N}$ and $L$. In particular, $0$ is
represented by the first word in $L$. The reciprocal map associating a
word $w\in L$ to its index in the genealogically ordered language $L$
is denoted $\val_S$. A set $X\subseteq\mathbb{N}$ of integers is {\em
  $S$-recognizable} if the language $\rep_S(X)$ over $\Sigma$ is
regular.  \smallskip

Let $S=(L,\Sigma,<)$ be an abstract numeration system built over an
infinite regular language $L$ having
$\mathcal{M}_L=(Q_L,q_{0,L},\Sigma,\delta_L,F_L)$ as minimal
automaton.  The transition function $\delta_L:Q_L\times\Sigma\to Q_L$
is extended on $Q_L\times \Sigma^*$ by $\delta_L(q,\varepsilon)=q$ and
$\delta_L(q,aw)=\delta_L(\delta_L(q,a),w)$ for all $q\in Q_L$,
$a\in\Sigma$ and $w\in\Sigma^*$. We denote by $\mathbf{u}_i(q)$ (resp.
$\mathbf{v}_i(q)$) the number of words of length $i$ (resp.  $\le i$)
accepted from $q\in Q_L$ in $\mathcal{M}_L$. By classical arguments,
the sequences $(\mathbf{u}_i(q))_{i\ge 0}$ (resp.
$(\mathbf{v}_i(q))_{i\ge 0}$) satisfy the same homogenous linear
recurrence relation for all $q\in Q_L$ (for details, see
Remark~\ref{rem:ujvj}).

In this section, we consider, with some extra hypothesis on the abstract
numeration system, the following decidability question analogous to
Problem~\ref{pb}.

\begin{problem}
    Given an abstract numeration system $S$ and a set
    $X\subseteq\mathbb{N}$ such that $\rep_S(X)$ is recognized by a
    (deterministic) finite automaton, is it decidable whether or not
    $X$ is ultimately periodic, i.e., whether or not $X$ is a finite
    union of arithmetic progressions ?
\end{problem}

Abstract numeration systems are a generalization of positional
numeration systems $U=(U_i)_{i\ge 0}$ for which $\mathbb{N}$ is
$U$-recognizable.

\begin{example}
    Take the language $L=\{\varepsilon\}\cup 1 \{0,01\}^*$ and assume
    $0<1$. Ordering the words of $L$ in genealogical order:
    $\varepsilon, 1,10,100,101,1000,1001,\ldots$ gives back the
    Fibonacci system.
\end{example}

The following example shows that the class of abstract
numeration systems is strictly bigger than the class of linear
numeration systems for which $\mathbb{N}$ is recognizable. 

\begin{example}\label{ex:2fib}
    Consider the language $L=\{\varepsilon\}\cup \{a,ab\}^* \cup \{
      c,cd \}^*$ and the ordering $a<b<c<d$ of the alphabet. If we
      order the first words in $L$ we get
$$\begin{array}{|r|r||r|r||r|r||r|r||r|r|}
\hline
0 & \varepsilon   & 5 & cc  & 10 & ccc & 15 & aaba & 20 & ccdc \\
1 & a  & 6 & cd  & 11 & ccd  & 16 & abaa & 21 & cdcc \\
2 & c  & 7 & aaa & 12 & cdc  & 17 & abab & 22 & cdcd\\
3 & aa & 8 & aab & 13 & aaaa   & 18 & cccc & 23 & aaaaa\\
4 & ab & 9 & aba & 14 & aaab  & 19 & cccd & 24 & aaaab\\
\hline
\end{array}$$

Notice that there is no bijection $\nu:\{a,b,c,d\}\to\mathbb{N}$
between $\{a,b,c,d\}$ and a set of integers leading to a positional
linear numeration system. Otherwise stated, $a,b,c,d$ cannot be
identified with usual ``digits''. Indeed, assume that there exists a
sequence $U$ of integers such that for all $x_1\cdots x_n\in L$, with
$x_i$ in $\{a,b,c,d\}$ for all $i$, $\val_U(\nu(x_1)\cdots
\nu(x_n))=\val_S(x_1\cdots x_n)$. Since $\val_S(a)=1$ and
$\val_S(c)=2$ we get $U_0=1$, $\nu(a)=1$ and $\nu(c)=2$. Moreover,
$\val_S(aa)=3=\nu(a)U_1+\nu(a)U_0$, so $U_1=2$.  Therefore,
$\val_U(\nu(c)\nu(c))=2U_1+2U_0=6$ but $\val_S(cc)=5$ getting a
contradiction. 

\begin{figure}[htbp]
    \centering
    \includegraphics{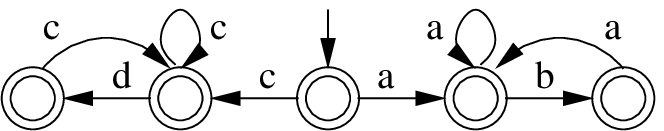}
    \caption{A DFA accepting $L$.}
    \label{fig:01}
\end{figure}
For all $i\ge 1$, we have $\mathbf{u}_i(q_{0,L})=2F_i$ and
$\mathbf{u}_0(q_{0,L})=1$. Consequently, for $i\ge 1$,
$$\mathbf{v}_i(q_{0,L})=1+\sum_{n=1}^i \mathbf{u}_n(q_{0,L})=1+2\sum_{n=1}^i F_n.$$
Notice that for $i\ge 1$,
$\mathbf{v}_i(q_{0,L})-\mathbf{v}_{i-1}(q_{0,L})=\mathbf{u}_i(q_{0,L})=2F_i$. Consequently, by definition of the Fibonacci sequence, we get for all 
$i\ge 3$,
$$\mathbf{v}_i(q_{0,L})-\mathbf{v}_{i-1}(q_{0,L})=(\mathbf{v}_{i-1}(q_{0,L})-\mathbf{v}_{i-2}(q_{0,L}))+
(\mathbf{v}_{i-2}(q_{0,L})-\mathbf{v}_{i-3}(q_{0,L}))$$
and
$$\mathbf{v}_i(q_{0,L})=2\mathbf{v}_{i-1}(q_{0,L})-\mathbf{v}_{i-3}(q_{0,L}), 
\text{ with }\mathbf{v}_0(q_{0,L})=1,\mathbf{v}_1(q_{0,L})=3,\mathbf{v}_2(q_{0,L})=7.$$
\end{example}

\begin{remark}\label{rem:ujvj}
    The computation given in the previous example to obtain a
    homogenous linear recurrence relation for the sequence
    $(\mathbf{v}_i(q_{0,L}))_{i\ge 0}$ can be carried on in general.
    Let $q\in Q_L$. The sequence $(\mathbf{u}_i(q))_{i\ge 0}$
    satisfies a homogenous linear recurrence relation of order $k$
    whose characteristic polynomial is the characteristic polynomial
    of the adjacency matrix of $\mathcal{M}_L$. There exist
    $a_1,\ldots,a_k\in\mathbb{Z}$ such that for all $i\ge 0$,
$$\mathbf{u}_{i+k}(q)=a_1 \mathbf{u}_{i+k-1}(q)+\cdots +a_k \mathbf{u}_i(q).$$
Consequently, we have for all $i\ge 0$
$$\mathbf{v}_{i+k+1}(q)-\mathbf{v}_{i+k}(q)= \mathbf{u}_{i+k+1}(q)=a_1 (\mathbf{v}_{i+k}(q)-\mathbf{v}_{i+k-1}(q))+\cdots +a_k (\mathbf{v}_{i+1}(q)-\mathbf{v}_{i}(q)).$$
Therefore the sequence $(\mathbf{v}_i(q))_{i\ge 0}$ satisfies a
homogenous linear recurrence relation of order $k+1$.
\end{remark}

As shown by the following lemma, in an abstract numeration system, the
different sequences $(\mathbf{u}_{i}(q))_{i\ge 0}$, for $q\in Q_L$,
are replacing the single sequence $(U_i)_{i\ge 0}$ defining a
positional numeration system as in Definition~\ref{def:numsys}.

\begin{lemma}\cite{LR}
Let
$w=\sigma_1\cdots \sigma_n\in L$. We have
\begin{equation}
    \label{eq:lelemme}
    \val_S(w)=\sum_{q\in Q_L}\sum_{i=1}^{|w|}\beta_{q,i}(w)\, \mathbf{u}_{|w|-i}(q)
\end{equation}
where 
\begin{equation}
    \label{eq:beta}
    \beta_{q,i}(w):=\# \{\sigma<\sigma_i\mid
    \delta_L(q_{0,L},\sigma_1\cdots \sigma_{i-1}\sigma) =q\}+\mathbf{1}_{q,q_{0,L}}
\end{equation}
for $i=1,\ldots ,|w|$.
\end{lemma}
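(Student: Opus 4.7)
The plan is to compute $\val_S(w)$ directly by counting the words of $L$ that strictly precede $w$ in the genealogical order, and then to check that the resulting expression matches the right-hand side of \eqref{eq:lelemme}. Since $\val_S$ assigns $0$ to the first word of $L$, the value $\val_S(w)$ equals the number of words $w'\in L$ with $w'<_{gen}w$. By definition of the genealogical order, these split into two disjoint classes: words $w'\in L$ with $|w'|<|w|$, and words $w'\in L$ with $|w'|=|w|=:n$ that are lexicographically smaller than $w$.

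First I would count the words in the second class. Any word $w'=\sigma_1'\cdots\sigma_n'\in L$ of length $n$ that is lexicographically smaller than $w$ has a unique position $i\in\{1,\dots,n\}$ at which it first differs from $w$: that is, $\sigma_j'=\sigma_j$ for $j<i$ and $\sigma_i'<\sigma_i$. The remaining suffix $\sigma_{i+1}'\cdots\sigma_n'$ can be any word of length $n-i$ accepted by $\mathcal{M}_L$ starting from the state $\delta_L(q_{0,L},\sigma_1\cdots\sigma_{i-1}\sigma_i')$. Summing over $i$ and over the admissible letters $\sigma<\sigma_i$, and then regrouping by the target state $q=\delta_L(q_{0,L},\sigma_1\cdots\sigma_{i-1}\sigma)$, the contribution of this second class is
$$\sum_{i=1}^{n}\sum_{q\in Q_L}\#\{\sigma<\sigma_i\mid\delta_L(q_{0,L},\sigma_1\cdots\sigma_{i-1}\sigma)=q\}\,\mathbf{u}_{n-i}(q).$$

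Next I would handle the first class. The number of words in $L$ of length strictly less than $n$ is $\sum_{j=0}^{n-1}\mathbf{u}_j(q_{0,L})$, and re-indexing with $i=n-j$ this becomes $\sum_{i=1}^{n}\mathbf{u}_{n-i}(q_{0,L})$. Since $\mathbf{u}_{n-i}(q_{0,L})=\sum_{q\in Q_L}\mathbf{1}_{q,q_{0,L}}\,\mathbf{u}_{n-i}(q)$, this can be absorbed as the $\mathbf{1}_{q,q_{0,L}}$ contribution in $\beta_{q,i}(w)$.

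Finally, adding the two counts and using the definition \eqref{eq:beta} of $\beta_{q,i}(w)$, the two sums combine into
$$\val_S(w)=\sum_{i=1}^{n}\sum_{q\in Q_L}\bigl(\#\{\sigma<\sigma_i\mid\delta_L(q_{0,L},\sigma_1\cdots\sigma_{i-1}\sigma)=q\}+\mathbf{1}_{q,q_{0,L}}\bigr)\mathbf{u}_{n-i}(q),$$
which is exactly \eqref{eq:lelemme} after exchanging the order of summation. The only mild subtlety is the bookkeeping for $i=1$, where the prefix $\sigma_1\cdots\sigma_{i-1}$ is empty and $\delta_L(q_{0,L},\varepsilon)=q_{0,L}$: one has to check that the conventions make the formula consistent, which they do since $\beta_{q,1}(w)$ counts letters $\sigma<\sigma_1$ whose single-step image from $q_{0,L}$ equals $q$, with the extra $+1$ in the coordinate $q=q_{0,L}$ accounting for the empty word $\varepsilon\in L$ (should $q_{0,L}$ be final) and, more importantly, for all words of length smaller than $n$ through the re-indexing done above. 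This bookkeeping is the only place requiring care, but it is entirely routine.
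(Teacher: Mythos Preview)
Your argument is correct and is exactly the standard proof: count the words of $L$ genealogically smaller than $w$ by splitting into shorter words and same-length lexicographically smaller words, then regroup by target state. The paper does not supply its own proof of this lemma---it is quoted from \cite{LR}---so there is nothing to compare against, but your derivation is precisely the one given there. One very small quibble: in your final paragraph the aside about the $+1$ at $i=1$ ``accounting for the empty word $\varepsilon\in L$'' is off (at $i=1$ the term $\mathbf{1}_{q,q_{0,L}}\mathbf{u}_{n-1}(q_{0,L})$ counts words of length $n-1$, not $\varepsilon$); the correct bookkeeping is simply the re-indexing you already did, and you can drop that parenthetical remark.
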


Recall that $\mathbf{1}_{q,q'}$ is equal to $1$ if $q=q'$ and it is
equal to $0$ otherwise.

\begin{proposition}\label{pro:pa}\cite{LR}
    Let $S=(L,\Sigma,<)$ be an abstract numeration system built over
    an infinite regular language $L$ over $\Sigma$. Any ultimately periodic set $X$
    is $S$-recognizable and a DFA accepting $\rep_S(X)$ can be
    effectively obtained.
\end{proposition}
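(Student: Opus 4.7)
The plan is to invoke closure properties of regular languages and reduce to arithmetic progressions: every ultimately periodic subset of $\mathbb{N}$ is a finite Boolean combination (up to a finite symmetric difference) of arithmetic progressions $a\mathbb{N}+b$, and regular languages are closed under finite union, intersection, and relative complementation within $L$. It therefore suffices to construct, for any $a\ge 1$ and $0\le b<a$, a DFA accepting $L\cap\val_S^{-1}(a\mathbb{N}+b)$.

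For such an $a$, Remark~\ref{rem:ujvj} tells us that for each $q\in Q_L$ the sequence $(\mathbf{u}_i(q))_{i\ge 0}$ satisfies a homogeneous integer linear recurrence, so $(\mathbf{u}_i(q)\bmod a)_{i\ge 0}$ is ultimately periodic. Let $\ell$ be the maximum of the preperiods and $p$ the $\lcm$ of the periods over $q\in Q_L$. The key formula is the one from the preceding lemma,
\begin{equation*}
\val_S(w)=\sum_{q\in Q_L}\sum_{i=1}^{|w|}\beta_{q,i}(w)\,\mathbf{u}_{|w|-i}(q),
\end{equation*}
in which the coefficient $\beta_{q,i}(w)$ is a computable function of the state $s_i=\delta_L(q_{0,L},\sigma_1\cdots\sigma_{i-1})$ and the letter $\sigma_i$ alone, hence accessible on the fly by simulating $\mathcal{M}_L$ as one component of the DFA state. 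The difficulty is that the weight $\mathbf{u}_{|w|-i}(q)$ depends on the total length $n=|w|$, which is unknown during a left-to-right scan.

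To overcome this, I would build a product DFA whose state records: (a) the current state $s\in Q_L$ of $\mathcal{M}_L$; (b) the current position modulo $p$; (c) $p$ parallel accumulators $A_0,\ldots,A_{p-1}\in\mathbb{Z}/a\mathbb{Z}$, where $A_r$ stores the running value of $\val_S(w)\bmod a$ under the hypothesis $n\equiv r\pmod p$; and (d) a FIFO queue of length at most $\ell$ holding the most recent pairs $(s_j,\sigma_j)$. When a new letter is read, the corresponding pair is pushed onto the queue; any pair thereby evicted corresponds to a position $j$ with the guarantee $n-j\ge\ell$, so under every hypothesis $n\equiv r\pmod p$ the value $\mathbf{u}_{n-j}(q)\bmod a$ is determined by the periodic regime as a deterministic function of $r$, $j\bmod p$ and $q$, and its contribution is added to each $A_r$. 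At the end of $w$, the counter yields $n\bmod p$, which selects the correct accumulator; the remaining at most $\ell$ queued pairs are folded in using the explicit preperiodic values $\mathbf{u}_0(q),\ldots,\mathbf{u}_{\ell-1}(q)\bmod a$ (the distance $n-j$ for a queued pair is just its offset from the right end of the queue). Acceptance requires both $s\in F_L$ and the resulting sum to equal $b\bmod a$. All components have finite range and every transition is computable, so this is an effectively constructible DFA. The main technical hurdle is the synchronization between the position counter modulo $p$, the queue, and the $p$ accumulators required to absorb the preperiod $\ell$; were each $(\mathbf{u}_i(q)\bmod a)$ purely periodic, the queue would be unnecessary and the construction would reduce to a straightforward product automaton.
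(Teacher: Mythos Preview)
The paper does not supply its own proof of this proposition; it is simply quoted from \cite{LR}. There is therefore nothing in the paper to compare your argument against line by line. That said, your construction is sound.

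One point deserves to be made explicit. The ``folding in'' of the queued pairs at end of input is not a run-time step the automaton performs after reading the last letter; a DFA has no such phase. What you actually do is absorb that computation into the \emph{definition} of the set of final states: a state $(s,\,i\bmod p,\,A_0,\ldots,A_{p-1},\,\text{queue})$ is declared accepting precisely when $s\in F_L$ and the value obtained by taking $A_{i\bmod p}$ and adding the contributions of the queued pairs (using the explicit preperiodic values $\mathbf{u}_0(q),\ldots,\mathbf{u}_{\ell-1}(q)\bmod a$, with the exact distance $n-j$ read off from the pair's offset in the queue) equals $b$. Since this predicate is computable on a finite state set, the DFA is well defined and effectively constructible. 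With this clarification your argument is complete.

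For comparison, a route closer in spirit to the proof of Lemma~\ref{lem:upU} is to read words right to left: then the weight index $|w|-i$ becomes the running position counter, so a single accumulator suffices and no queue is needed. The cost is that $\beta_{q,i}(w)$ depends on the $\mathcal{M}_L$-state reached by the yet-unread prefix, which forces one to run the reversal of $\mathcal{M}_L$ (an NFA) in parallel and determinise. Your forward construction trades that determinisation for the $p$ parallel accumulators and the length-$\ell$ buffer; both approaches give an effective DFA.
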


Recall that an automaton is {\it trim} if it is accessible and
coaccessible (each state can be reached from the initial state and
from each state, one can reach a final state).

\begin{proposition}\label{pro:absnum}
    Let $S=(L,\Sigma,<)$ be an abstract numeration system such that for
    all states $q$ of the trim minimal automaton
    $\mathcal{M}_L=(Q_L,q_{0,L},\Sigma,\delta_L,F_L)$ of $L$,
    $$\lim_{i\to+\infty}\mathbf{u}_i(q)=+\infty$$
    and $\mathbf{u}_i(q_{0,L})>0$ for all $i\ge 0$. If
    $X\subseteq\mathbb{N}$ is an ultimately periodic set of period
    $p_X$, then any deterministic finite automaton
    accepting $\rep_S(X)$ has at least $\lceil
    N_{\mathbf{v}}(p_X)/\#Q_L\rceil$ states where
    $\mathbf{v}=(\mathbf{v}_i(q_{0,L}))_{i\ge 0}$.
\end{proposition}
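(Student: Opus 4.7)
My plan is to mirror the argument of Proposition~\ref{pro:1}, with an additional pigeonhole step on the states of $\mathcal{M}_L$: in the abstract setting we no longer have the canonical prefix words $10^h$ available in positional systems, so we will have to use genealogically smallest words of given lengths instead and lose a factor of $\#Q_L$ in the process.

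I would start by putting $N := N_{\mathbf{v}}(p_X)$ and letting $a_X$ denote the preperiod of $X$. By definition of $N_{\mathbf{v}}(p_X)$, exactly $N$ residue classes mod $p_X$ are attained infinitely often by $(\mathbf{v}_i(q_{0,L}))_{i\ge 0}$; for each such residue $r$, I pick an index $i_r$ large enough so that $\mathbf{v}_{i_r}(q_{0,L})\equiv r\pmod{p_X}$ and $\mathbf{v}_{i_r}(q_{0,L})>a_X$. The hypothesis $\mathbf{u}_{i_r+1}(q_{0,L})>0$ guarantees the existence of the genealogically smallest word $s^{(r)}$ of length $i_r+1$ in $L$, and a direct count gives $\val_S(s^{(r)})=\mathbf{v}_{i_r}(q_{0,L})$. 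This yields $N$ words of $L$ whose values are pairwise distinct mod $p_X$ and all exceed $a_X$. Each $s^{(r)}$ ends at some final state of $\mathcal{M}_L$, so the pigeonhole principle produces $M:=\lceil N/\#Q_L\rceil$ words among them, say $w_1,\ldots,w_M$, sharing a common ending state $q^*\in F_L$.

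The remaining task is to show that $w_1,\ldots,w_M$ are pairwise inequivalent for $\sim_{\rep_S(X)}$, which immediately gives the required lower bound. Fix two such words $w,w'$; their values differ mod $p_X$, both exceed $a_X$, and they end at the common state $q^*$, so for any suffix $z$ we have $wz\in L$ iff $w'z\in L$. Formula~\eqref{eq:lelemme} shows that the contribution of positions within $z$ to $\val_S(wz)$ depends only on the pair $(q^*,z)$, and hence cancels in $\val_S(wz)-\val_S(w'z)$. I would then combine Lemma~\ref{lem:per} (providing an integer $t<p_X$ such that exactly one of $\val_S(w)+t,\val_S(w')+t$ lies in $X$) with the hypothesis $\lim_{i\to+\infty}\mathbf{u}_i(q)=+\infty$ for every $q\in Q_L$ to construct a common suffix $z^*$ accepted from $q^*$ that realises a common additive shift by $t$ mod $p_X$ in both $\val_S(wz^*)$ and $\val_S(w'z^*)$. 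Such a $z^*$ plays the role of the words $w_{i,j}$ in the proof of Proposition~\ref{pro:1} and separates $w$ from $w'$ in $\rep_S(X)$.

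The hard part will be this last construction. Unlike in the positional setting, concatenation in an abstract numeration system does not simply add values, so forcing the same shift $t$ mod $p_X$ on two distinct base words $w,w'$ via a single common suffix requires a careful combinatorial argument tailored to the value formula~\eqref{eq:lelemme}, exploiting the unboundedness of the $\mathbf{u}_i(q)$ to obtain enough freedom in the choice of $z^*$. Once this suffix is in hand, the pairwise inequivalence of $w_1,\ldots,w_M$ follows, and any DFA accepting $\rep_S(X)$ must have at least $M=\lceil N_{\mathbf{v}}(p_X)/\#Q_L\rceil$ states.
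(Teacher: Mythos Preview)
Your plan has a real gap at the step you yourself flag as ``the hard part''. Pigeonholing on the \emph{terminal} state $q^*$ of the words $s^{(r)}$ and then appending a common suffix $z$ does not give you the control you need. You correctly note that the contribution of the positions inside $z$ to $\val_S(wz)$ depends only on $(q^*,z)$, but the contribution of the positions inside $w$ does \emph{not} remain equal to $\val_S(w)$: in formula~\eqref{eq:lelemme} each coefficient $\beta_{q,i}(w)$ for $i\le|w|$ is unchanged, yet it is now multiplied by $\mathbf{u}_{|w|+|z|-i}(q)$ instead of $\mathbf{u}_{|w|-i}(q)$. Since your words $w,w'$ are the genealogically smallest words of two \emph{different} lengths, there is in general no relation between $\val_S(wz)-\val_S(w'z)$ and $\val_S(w)-\val_S(w')$ modulo $p_X$, and no reason a single suffix $z^*$ should realise the same additive shift $t$ on both. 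The ``careful combinatorial argument'' you defer cannot be carried out from this starting point.

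The paper sidesteps this obstacle by pigeonholing at a different place. It first fixes $J$ with $\mathbf{u}_J(q)\ge p_X$ for every state $q$, splits each smallest word $w_i=\rep_S(\mathbf{v}_i(q_{0,L}))$ as $a_ib_i$ with $|b_i|=J$, and applies the pigeonhole to the states $q_i=\delta_L(q_{0,L},a_i)$ reached after the \emph{prefixes} $a_i$. When $q_{h_i}=q_{h_j}$, the $t$-th word of length $J$ accepted from that common state is the same for both, and one has exactly $\val_S(a_{h_i}b_{h_i,t})=\mathbf{v}_{h_i}(q_{0,L})+t$ because $a_{h_i}b_{h_i,0}=w_{h_i}$ is still the genealogically smallest word of its length and the words $a_{h_i}b_{h_i,t}$ are its immediate successors in $L$. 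Thus the base values are precisely the $\mathbf{v}_{h_i}(q_{0,L})$ already chosen pairwise distinct mod $p_X$, and Lemma~\ref{lem:per} applies directly to the prefixes $a_{h_i}$. In short: replace the last $J$ letters rather than append new ones; the fixed suffix length is exactly what tames the value formula and makes your ``hard part'' disappear.
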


\begin{proof} Let $a_X$ be the preperiod of $X$.
    Since for all states $q$ of $\mathcal{M}_L$, we have
    $\lim_{i\to+\infty}\mathbf{u}_i(q)=+\infty$, there exists a minimal
    constant $J>0$ such that $\mathbf{u}_J(q)\ge p_X$ for all $q\in
    Q_L$. Consider for any $i\ge 0$, the word
    $$w_i=\rep_S(\mathbf{v}_i(q_{0,L})),$$ 
    corresponding to the first word of length $i+1$ in the
    genealogically ordered language $L$. Consequently, for $i\ge J-1$,
    $w_i$ is factorized as $w_i=a_i b_i$ with $|b_i|=J$ and we define
    $q_i:=\delta_L(q_{0,L},a_i)$. Notice that $b_i$ is the smallest
    word of length $J$ accepted from $q_i$. By definition of $J$, from
    each $q_i$, there are at least $p_X$ words of length $J$ leading
    to a final state. If we order them by genealogical ordering, we
    denote the $p_X$ first of them by
                $$b_i=b_{i,0}<b_{i,1}<\cdots <b_{i,p_X-1}.$$
                Notice that for $t\in\{0,\ldots,p_X-1\}$, we have
                $$\val_S(a_ib_{i,t})=\val_S(a_ib_i)+t=\mathbf{v}_i(q_{0,L})+t.$$

                The sequence $(\mathbf{v}_i(q_{0,L})\mod p_X)_{i\ge
                  0}$ is ultimately periodic and takes infinitely
                often $N_{\mathbf{v}}(p_X)=:N$ different values. Let
                $h_1,\ldots,h_{N}\ge J-1$ such that
                $$i\neq j\Rightarrow \mathbf{v}_{h_i}(q_{0,L})\not\equiv  \mathbf{v}_{h_j}(q_{0,L})\mod p_X$$
                and for all $i\in\{1,\ldots,N\}$, $\mathbf{v}_{h_i}(q_{0,L})\ge a_X$. We have
                $$\rep_S(\mathbf{v}_{h_i}(q_{0,L}))=w_{h_i}=a_{h_i}b_{h_i}\text{ and }q_{h_i}=\delta_L(q_{0,L},a_{h_i}).$$
                The elements in the set $\{q_{h_1},\ldots,q_{h_N}\}$
                can take only $\# Q_L$ different values. So at least
                $\sigma:=\lceil N/\# Q_L\rceil$ of them are the same.
                For the sake of simplicity, assume that they are
                $q_{h_1},\ldots,q_{h_\sigma}$. Consequently, for
                $i,j\in\{1,\ldots,\sigma\}$ and for all
                $t=0,\ldots, p_X-1$, we have $b_{h_i,t}=b_{h_j,t}$.
                For all $i,j\in\{1,\ldots,\sigma\}$ such that $i\neq
                j$, by Lemma~\ref{lem:per}, there exists $t_{i,j}<p_X$
                such that either $\mathbf{v}_{h_i}(q_{0,L})+t_{i,j}\in
                X$ and $\mathbf{v}_{h_j}(q_{0,L})+t_{i,j}\not\in X$ or,
                $\mathbf{v}_{h_i}(q_{0,L})+t_{i,j}\not\in X$ and
                $\mathbf{v}_{h_j}(q_{0,L})+t_{i,j}\in X$. Therefore,
                the words $a_{h_i}$ and $a_{h_j}$ do not belong to the
                same equivalence class for the relation
                $\sim_{\rep_S(X)}$. This can be shown by concatenating
                the word $b_{h_i,t_{i,j}}=b_{h_j,t_{i,j}}$.  Hence the
                minimal automaton of $\rep_S(X)$ has at least $\sigma$
                states.
\end{proof}

\begin{corollary}\label{cor:ans}
    Let $S=(L,\Sigma,<)$ be an abstract numeration system having the
    same properties as in Proposition~\ref{pro:absnum}. 
                Assume that the sequence $\mathbf{v}=(\mathbf{v}_i(q_{0,L}))_{i\ge
    0}$ is such that 
    $$\lim_{m\to+\infty}N_{\mathbf{v}}(m)=+\infty.$$ Then the period of an
    ultimately periodic set $X\subseteq\mathbb{N}$ such that
    $\rep_S(X)$ is accepted by a DFA with $d$ states is bounded by the
    smallest integer $s_0$ such that for all $m\ge s_0$,
    $N_{\mathbf{v}}(m)>d\,\#Q_L$, where $Q_L$ is the set of states of
    the (trim) minimal automaton of~$L$.
\end{corollary}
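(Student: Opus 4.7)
The plan is to derive this directly from Proposition~\ref{pro:absnum} by a contrapositive counting argument, in complete analogy with the way Corollary~\ref{cor:2} was deduced from Proposition~\ref{pro:1}. The hypotheses of Proposition~\ref{pro:absnum} carry over verbatim, so if $X\subseteq\mathbb{N}$ is ultimately periodic and $U$-recognizable with period $p_X$, then any DFA accepting $\rep_S(X)$ has at least $\lceil N_{\mathbf v}(p_X)/\#Q_L\rceil$ states. Our given DFA has $d$ states, so
\[
\Bigl\lceil\frac{N_{\mathbf v}(p_X)}{\#Q_L}\Bigr\rceil\le d,
\]
which immediately gives $N_{\mathbf v}(p_X)\le d\cdot\#Q_L$.

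Next I would note that, because $\lim_{m\to+\infty}N_{\mathbf v}(m)=+\infty$, the integer $s_0$ defined in the statement exists and is well defined: the set of $m$ such that $N_{\mathbf v}(m)\le d\cdot\#Q_L$ is bounded, so there is a smallest threshold $s_0$ beyond which all values of $N_{\mathbf v}(m)$ strictly exceed $d\cdot\#Q_L$. I would then argue by contradiction: if $p_X\ge s_0$, then by definition of $s_0$ we would have $N_{\mathbf v}(p_X)>d\cdot\#Q_L$, contradicting the inequality of the previous paragraph. Hence $p_X<s_0$, which is the desired bound.

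There is no real obstacle here since the hard work has already been done in Proposition~\ref{pro:absnum}; the only thing to be careful about is the role of the factor $\#Q_L$, which reflects the fact that in the abstract setting the role played by a single recurrence sequence $(U_i)_{i\ge 0}$ in the positional case is now distributed among the $\#Q_L$ sequences $(\mathbf u_i(q))_{i\ge 0}$, $q\in Q_L$. This is precisely why the threshold condition is phrased in terms of $N_{\mathbf v}(m)>d\cdot\#Q_L$ rather than $N_{\mathbf v}(m)>d$.
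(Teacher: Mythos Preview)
Your proposal is correct and matches the paper's approach: the paper gives no explicit proof for this corollary, treating it as an immediate consequence of Proposition~\ref{pro:absnum} in exact analogy with how Corollary~\ref{cor:2} follows from Proposition~\ref{pro:1}. The only slip is the phrase ``$U$-recognizable'', which should read ``$S$-recognizable'' in this abstract-numeration-system setting.
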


\begin{proposition}\label{pro:ans}
    Let $S=(L,\Sigma,<)$ be an abstract numeration system. If
    $X\subseteq\mathbb{N}$ is an ultimately periodic set of period
    $p_X$ such that $\rep_S(X)$ is accepted by a DFA with $d$ states,
    then the preperiod $a_X$ of $X$ is bounded by an effectively
    computable constant $C$ depending only on $d$ and $p_X$.
\end{proposition}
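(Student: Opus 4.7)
The argument will mirror Proposition~\ref{pro:preper}, but the value formula \eqref{eq:lelemme} is state-dependent, so pumping must take place simultaneously in the DFA $\mathcal{A}$ with $d$ states accepting $\rep_S(X)$ and in the trim minimal automaton $\mathcal{M}_L$ of $L$. I will work in the product $\mathcal{A}\times\mathcal{M}_L$, which has at most $d\cdot\#Q_L$ states and accepts $\rep_S(X)$ (since $\rep_S(X)\subseteq L$), so that any pumped loop is simultaneously a loop in $\mathcal{A}$ and in $\mathcal{M}_L$. By Remark~\ref{rem:ujvj}, for every $q\in Q_L$ the sequence $(\mathbf{u}_i(q))_{i\ge 0}$ satisfies a linear recurrence, so $(\mathbf{u}_i(q)\bmod p_X)_{i\ge 0}$ is ultimately periodic; set $I:=\max_{q\in Q_L}\iota_{\mathbf{u}(q)}(p_X)$ and $\Pi:=\lcm\{\pi_{\mathbf{u}(q)}(p_X):q\in Q_L\}$, both effectively computable from $S$ and $p_X$.

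\textbf{Pumping and value analysis.} Assume for contradiction that $|\rep_S(a_X-1)|>d\cdot\#Q_L+I$. Factor $\rep_S(a_X-1)=w w_4$ with $|w_4|=I$ and $|w|>d\cdot\#Q_L$, and apply the pumping lemma in the product automaton to get $w=w_1 w_2 w_3$ with $|w_2|\ge 1$ and $w_2$ a loop in $\mathcal{M}_L$. For every $k\ge 0$, set $v_k:=w_1 w_2^k w_3 w_4$; then $v_k\in\rep_S(X)$ iff $v_1=\rep_S(a_X-1)\in\rep_S(X)$, iff $a_X-1\in X$. Because $w_2$ loops in $\mathcal{M}_L$, formula \eqref{eq:lelemme} decomposes $\val_S(v_k)$ into: a suffix contribution from the positions in $w_3 w_4$ that is independent of $k$; a prefix contribution from $w_1$ of the form $\sum_q\sum_{i=1}^{|w_1|}\beta_{q,i}(w_1)\,\mathbf{u}_{|v_k|-i}(q)$; and a contribution from the $k$ copies of $w_2$ equal to $\sum_{s=1}^k\sum_q\sum_{r=1}^{|w_2|}B(q,r)\,\mathbf{u}_{s|w_2|+|w_3 w_4|-r}(q)$, where $B(q,r)$ does not depend on $k$. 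Since $|w_4|=I$, every $\mathbf{u}$-index appearing above lies in the periodic tail. Choose $k=1+p_X\Pi$: the index shift in the $w_1$-contribution is by $p_X\Pi|w_2|$, a multiple of $\Pi$, so that contribution is unchanged modulo $p_X$; and for each $q,r$, the $p_X\Pi$ additional summands in the $w_2$-contribution form an arithmetic progression whose residues mod $p_X$ are periodic in $s$ with period dividing $\Pi$, so summing $p_X\Pi$ consecutive terms gives a multiple of $p_X$. Therefore $\val_S(v_k)\equiv a_X-1\pmod{p_X}$.

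\textbf{Contradiction, bound, main obstacle.} Since $v_k$ is strictly longer than $v_1$, genealogical ordering gives $\val_S(v_k)>a_X-1$, hence $\val_S(v_k)=a_X-1+mp_X$ for some $m\ge 1$. By minimality of the preperiod $a_X$ and period $p_X$, $a_X-1$ and $a_X-1+p_X$ have opposite memberships in $X$, and iterating the periodicity above $a_X$ yields $a_X-1+mp_X\in X$ iff $a_X-1\notin X$, contradicting $v_k\in\rep_S(X)\iff a_X-1\in X$. Hence $a_X\le \mathbf{v}_{d\cdot\#Q_L+I}(q_{0,L})$, an effective bound depending only on $d$ and $p_X$ (the system $S$ being fixed). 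The main technical obstacle is the value bookkeeping: because the coefficients $\beta_{q,i}$ in \eqref{eq:lelemme} depend on the state of $\mathcal{M}_L$ reached after reading the prefix, the pump must be a loop in $\mathcal{M}_L$ itself (forcing the product construction), and the choice $|w_4|=I$ is precisely what guarantees that every $\mathbf{u}$-index appearing after pumping lies in the periodic tail, where the arithmetic-progression and multiple-of-$p_X$ arguments can be applied.
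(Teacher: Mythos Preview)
Your proof is correct and follows essentially the same approach as the paper's: pump in the product automaton $\mathcal{A}\times\mathcal{M}_L$, use the loop in $\mathcal{M}_L$ to make the $\beta$-coefficients in \eqref{eq:lelemme} independent of the number of repetitions, and exploit ultimate periodicity of each $(\mathbf{u}_i(q)\bmod p_X)_{i\ge 0}$ to show that pumping $p_X\Pi$ extra copies of $w_2$ preserves the value modulo $p_X$, contradicting the minimality of $a_X$. The only organizational difference is that you pre-split off a suffix of length exactly $I$ before pumping (the paper pumps first with $|xy|\le d\cdot\#Q_L$ and then argues $|z|>I(p_X)$), and you make the final bound explicit as $a_X\le \mathbf{v}_{d\cdot\#Q_L+I}(q_{0,L})$.
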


\begin{proof}
    Let $\mathcal{A}=(Q,q_0,\Sigma,\delta,F)$ be a DFA with $d$ states
    accepting $\rep_S(X)$.  As usual,
    $\mathcal{M}_L=(Q_L,q_{0,L},\Sigma,\delta_L,F_L)$ is the minimal
    automaton of $L$ and for any state $q\in Q_L$, $\mathbf{u}_i(q)$
    is the number of words of length $i$ accepted from $q$ in
    $\mathcal{M}_L$. Since $(\mathbf{u}_i(q))_{i\ge 0}$ satisfies a
    linear recurrence relation, the sequences $(\mathbf{u}_i(q)\mod
    p_X)_{i\ge 0}$ are ultimately periodic for all $q\in Q_L$. As usual,
    we denote by $\iota_{\mathbf{u}(q)}(p_X)$ (resp.
    $\pi_{\mathbf{u}(q)}(p_X)$) the preperiod (resp. the period) of
    $(\mathbf{u}_i(q)\mod p_X)_{i\ge 0}$.  We set
                $$I(p_X):=\max_{q\in Q_L} \iota_{\mathbf{u}(q)}(p_X)$$ 
                and
                $$P(p_X):=\lcm_{q\in Q_L} \pi_{\mathbf{u}(q)}(p_X).$$
                For $a_X$ large enough, we have $|\rep_S(a_X-1)|>d\,
                \# Q_L$.  By the pumping lemma applied to the product
                automaton\footnote{The automaton
                  $\mathcal{A}\times\mathcal{M}_L$ is defined as
                  follows. For any state $(q,q')$ in the set of states
                  $Q\times Q_L$, when reading $a\in\Sigma$, one
                  reaches in $\mathcal{A}\times\mathcal{M}_L$ the
                  state $(\delta(q,a),\delta_L(q',a))$.  The initial
                  state is $(q_0,q_{0,L})$ and the set of final states
                  is $F\times F_L$.  Roughly speaking, the product
                  automaton mimics the behavior of both automata
                  $\mathcal{A}$ and $\mathcal{M}_L$.}
                $\mathcal{A}\times\mathcal{M}_L$, there exist $x,y,z$
                with $y\neq\varepsilon$, $|xy|\le d\, \# Q_L$,
                $\delta(q_0,x)=\delta(q_0,xy)$,
                $\delta_L(q_{0,L},x)=\delta_L(q_{0,L},xy)$ and such
                that
                $$\rep_S(a_X-1)=xyz$$  and
                for all $n\ge 0$,
\begin{equation}
    \label{eq:arg}
xy^nz\in\rep_S(X).    
\end{equation}
Since $|xy|$ is bounded by a constant, we also have $|z|>I(p_X)$ if
$a_X$ is chosen large enough.

Since $|z|>I(p_X)$, using \eqref{eq:lelemme}, \eqref{eq:beta} and for
all $q\in Q_L$ the periodicity of the sequences $(\mathbf{u}_i(q)\mod
p_X)_{i\ge 0}$, we have for all $\ell\ge 0$ that
\begin{equation}
    \label{eq:arg2}
    \val_S(xy^{\ell p_X P(p_X)}yz)\equiv\val_S(xyz)\mod p_X.
\end{equation}
Let us give some extra details on how we derive 
identity \eqref{eq:arg2}. Assume $x=x_1\cdots x_r$, $y=y_1\cdots y_s$ and
$z=z_1\cdots z_t$. For all $n\ge 1$, using \eqref{eq:lelemme} for
$w=xy^nz$, we get $|w|=r+ns+t$ and 
\begin{eqnarray*}
\val_S(xy^nz)&=&
\sum_{q\in Q_L}\left(
\sum_{i=1}^{r}\beta_{q,i}(w) \mathbf{u}_{|w|-i}(q)\right. \\
&&
+\sum_{i=r+1}^{r+s} \beta_{q,i}(w) \mathbf{u}_{|w|-i}(q)
+\cdots + \sum_{i=r+(n-1)s+1}^{r+n s} \beta_{q,i}(w) \mathbf{u}_{|w|-i}(q)\\
&&\left.
+ \sum_{i=r+ns+1}^{r+n s+t} \beta_{q,i}(w) \mathbf{u}_{|w|-i}(q)
\right),
\end{eqnarray*}
where the first (resp. second, third) line corresponds, as explained
below, to the contribution of $x$ (resp. $y^n$, $z$).  By definition
\eqref{eq:beta} of the coefficients $\beta_{q,i}(w)$, we know that
$\beta_{q,1}(w)$ depends only on $x_1$ but $\beta_{q,2}(w)$ depends
only on $x_2$ and on $\delta_L(q_{0,L},x_1)$. Continuing this way,
$\beta_{q,r}(w)$ depends only on $x_r$ and on
$\delta_L(q_{0,L},x_1\cdots x_{r-1})$ and for $1\le j\le s$,
$\beta_{q,r+j}(w)$ depends on $y_j$ and on
$\delta_L(q_{0,L},xy_1\cdots y_{j-1})$. Now $\beta_{q,r+s+1}(w)$
depends only on $y_1$ and on $\delta_L(q_{0,L},xy_1\cdots
y_s)=\delta_L(q_{0,L},xy)=\delta_L(q_{0,L},x)$. This implies that
$\beta_{q,r+s+j}(w)=\beta_{q,r+j}(w)$ for all $q\in Q_L$ and all
$j\in\{1,\ldots,s\}$. This argument can be repeated with every copy of
$y$ appearing in $w$. Consequently, the previous expansion becomes
\begin{eqnarray*}
\val_S(xy^nz)&=&
\sum_{q\in Q_L}\biggr(
\sum_{i=1}^{r}\beta_{q,i}(w) \mathbf{u}_{|w|-i}(q)
+\sum_{i=r+1}^{r+s} \beta_{q,i}(w) \underbrace{\sum_{j=0}^{n-1}\mathbf{u}_{|w|-i-js}(q)}_{(*)}
 \\
&& + \sum_{i=r+ns+1}^{r+n s+t} \beta_{q,i}(w) \mathbf{u}_{|w|-i}(q)
\biggr).
\end{eqnarray*}
Assume now that $n=1+\ell p_X P(p_X)$, with $\ell\ge 0$. For $q\in Q_L$ and
$i=r+1,\ldots,r+s$, we have
$$(*)=\sum_{j=0}^{n-1}\mathbf{u}_{|w|-i-js}(q)=\mathbf{u}_{|w|-i}(q)+\sum_{j=1}^{\ell p_X P(p_X)}\mathbf{u}_{|w|-i-js}(q)$$
and the second term is congruent to $0$ modulo $p_X$ due to the
periodicity of the sequences $(\mathbf{u}_i(q)\mod p_X)_{i\ge 0}$
(recall that in the case we are considering, $|z|=t>I(p_X)$).
Consequently, for $n=1+\ell p_X P(p_X)$, we have
\begin{eqnarray*}
\val_S(xy^nz)&\equiv&
\sum_{q\in Q_L}\biggr(
\sum_{i=1}^{r}\beta_{q,i}(w) \mathbf{u}_{|w|-i}(q)
+\sum_{i=r+1}^{r+s} \beta_{q,i}(w) \mathbf{u}_{|w|-i}(q)
 \\
&& + \sum_{i=r+ns+1}^{r+n s+t} \beta_{q,i}(w) \mathbf{u}_{|w|-i}(q)
\biggr) \mod p_X.
\end{eqnarray*}
It is then easy to derive \eqref{eq:arg2}.
\medskip

We now use the minimality of $a_X$ to get a contradiction. Assume that
$a_X-1$ is in $X$ (the case not in $X$ is similar). Therefore for all
$n\ge 1$, $a_X+np_X-1$ is not in $X$. From \eqref{eq:arg}, for $\ell>0$
we get $xy^{\ell p_XP(p_X)}yz\in\rep_S(X)$, but from \eqref{eq:arg2} this
word represents a number of the kind $a_X+np_X-1$ with $n>0$ which
cannot belong to $X$.

Notice that $C$ can be effectively estimated as follows. One has to
choose a constant $C$ such that $a_X>C$ implies $|\rep_S(a_X-1)|-d\,
\#Q_L>I(p_X)$.  Since the abstract numeration system $S$, the period
$p_X$ and the number $d$ of states are given, $I(p_X)$ and $\rep_S(n)$
for all $n\ge 0$ can be effectively computed.
\end{proof}

\begin{theorem}\label{the:ans}
Let $S=(L,\Sigma,<)$ be an abstract numeration system such that for
    all states $q$ of the trim minimal automaton
    $\mathcal{M}_L=(Q_L,q_{0,L},\Sigma,\delta_L,F_L)$ of $L$
    $$\lim_{i\to\infty}\mathbf{u}_i(q)=+\infty$$
    and
    $\mathbf{u}_i(q_{0,L})>0$ for all $i\ge 0$. Assume moreover that
    $\mathbf{v}=(\mathbf{v}_i(q_{0,L}))_{i\ge 0}$ is such that
    $\lim_{m\to+\infty}N_{\mathbf{v}}(m)=+\infty$.  It is decidable
    whether or not an $S$-recognizable set is ultimately periodic.
\end{theorem}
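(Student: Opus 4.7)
The plan is to mirror the proof of Theorem \ref{the:finfin}, assembling in sequence the three ingredients developed earlier in this section: Corollary \ref{cor:ans} for an effective bound on the period, Proposition \ref{pro:ans} for an effective bound on the preperiod, and Proposition \ref{pro:pa} to build DFAs for each candidate ultimately periodic set so they can be compared with the given input DFA.

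Concretely, starting from a DFA $\mathcal{A}$ with $d$ states recognizing $\rep_S(X)$, I would first assume that $X$ is ultimately periodic with period $p_X$ and preperiod $a_X$ and derive computable upper bounds. Since $\lim_{m\to+\infty}N_{\mathbf{v}}(m)=+\infty$, Corollary \ref{cor:ans} produces an integer $P$ such that $p_X\le P$: one computes $N_{\mathbf{v}}(m)$ for $m=1,2,\ldots$ by running the sequence $(\mathbf{v}_i(q_{0,L})\bmod m)_{i\ge0}$ until a $(k{+}1)$-tuple of consecutive terms repeats (Remark \ref{rem:ujvj} guarantees that $\mathbf{v}$ satisfies a linear recurrence of order $k+1$, where $k$ is the order of the recurrence for the $\mathbf{u}_i(q)$'s), and halts as soon as $N_{\mathbf{v}}(m)>d\,\#Q_L$ for every $m$ up to that point. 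Then, for each admissible period $p\le P$, Proposition \ref{pro:ans} gives an effectively computable constant $C(d,p)$ with $a_X\le C(d,p)$; setting $C:=\max_{p\le P}C(d,p)$ furnishes a uniform bound on the preperiod.

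Once both bounds are in hand, the set of candidate pairs $(a,p)$ with $a\le C$ and $p\le P$ is finite, and for each pair there are at most $2^{a+p}$ ultimately periodic sets with preperiod dividing $a$ and period dividing $p$. For every such candidate $Y$, Proposition \ref{pro:pa} yields a DFA $\mathcal{B}_Y$ accepting $\rep_S(Y)$, and we decide $L(\mathcal{A})=L(\mathcal{B}_Y)$ by testing emptiness of the symmetric difference, which is algorithmic. We answer \emph{yes} if and only if some candidate $Y$ matches.

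The main conceptual step, and where most of the content sits, is not the algorithmic wrap-up but rather verifying that every bound invoked above is \emph{effectively} computable from $S$, $\mathcal{A}$, and the recurrence data of $\mathbf{v}$; the rest is bookkeeping. The only subtlety is ensuring that the search for $P$ in Corollary \ref{cor:ans} actually terminates, which follows from the hypothesis $\lim_{m\to+\infty}N_{\mathbf{v}}(m)=+\infty$ together with the fact that $N_{\mathbf{v}}(m)$ can be determined in finitely many operations by Remark \ref{rem:nupiu} and the linear recurrence provided by Remark \ref{rem:ujvj}.
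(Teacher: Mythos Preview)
Your overall architecture---bound the period, bound the preperiod, enumerate candidates, compare automata---is exactly the paper's. The gap is in the step where you make the period bound \emph{effective}.

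Corollary~\ref{cor:ans} guarantees that $p_X$ is bounded by the least $s_0$ with $N_{\mathbf v}(m)>d\,\#Q_L$ for \emph{all} $m\ge s_0$, but it does not tell you how to compute $s_0$. Your proposed procedure ``compute $N_{\mathbf v}(m)$ for $m=1,2,\ldots$ and halt as soon as $N_{\mathbf v}(m)>d\,\#Q_L$ for every $m$ up to that point'' cannot work: first, $N_{\mathbf v}(1)=1$, so your halting condition is never met as written; second, and more importantly, even with the intended quantifier (``for every $m$ beyond that point''), the map $m\mapsto N_{\mathbf v}(m)$ is \emph{not} monotone with respect to the usual order on $\mathbb N$ (only along divisibility), so no finite prefix of its values certifies the tail inequality. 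The hypothesis $\lim_{m\to\infty}N_{\mathbf v}(m)=+\infty$ ensures $s_0$ exists but gives no effective rate, hence no stopping criterion.

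The paper's proof sidesteps this by never computing $s_0$ directly. It writes the putative period as $p_X=p_1^{v_1}\cdots p_r^{v_r}c$ where $p_1,\ldots,p_r$ are the prime divisors of the last coefficient $a_k$ of the linear recurrence satisfied by $\mathbf v$ and $\gcd(a_k,c)=1$. Then: (i) for the coprime part, $(\mathbf v_i\bmod c)$ is \emph{purely} periodic, so $N_{\mathbf v}(c)\ge\alpha(c)$ with $\alpha$ non-decreasing and tending to infinity, which together with $N_{\mathbf v}(c)\le\pi_{\mathbf v}(c)\le\pi_{\mathbf v}(p_X)\le (N_{\mathbf v}(p_X))^k\le(d\,\#Q_L)^k$ yields a computable bound on $c$; (ii) for each $p_j$, the map $v\mapsto N_{\mathbf v}(p_j^{v})$ \emph{is} non-decreasing, so one searches $v=1,2,\ldots$ until $N_{\mathbf v}(p_j^{v})>(d\,\#Q_L)^k$, which terminates by the hypothesis. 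This prime decomposition is precisely the missing ingredient that turns the existential bound of Corollary~\ref{cor:ans} into an effective one; once you add it, the rest of your outline (Proposition~\ref{pro:ans} for the preperiod, Proposition~\ref{pro:pa} for candidate automata, emptiness of symmetric differences) goes through unchanged.
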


\begin{proof}
    
    The proof is essentially the same as the one of
    Theorem~\ref{the:finfin}. First notice that the sequence
    $\mathbf{v}=(\mathbf{v}_i(q_{0,L}))_{i\ge 0}$ satisfies a linear
    recurrence relation of the kind \eqref{eq:linrec} having $a_k$ as
    last coefficient. Moreover it is increasing, since
    $\mathbf{u}_i(q_{0,L})>0$ for all $i\ge 0$.

    Let the prime decomposition of $|a_k|$ be $|a_k|=p_1^{u_1}\cdots
    p_r^{u_r}$ with $u_1,\ldots,u_r> 0$. Consider a DFA $\mathcal{A}$
    with $d$ states accepting an $S$-recognizable set
    $X\subseteq\mathbb{N}$. Assume that $X$ is periodic with period
    $$p_X=p_1^{v_1}\cdots p_r^{v_r}c$$
    where $\gcd(a_k,c)=1$ and
    $v_1,\ldots,v_r\ge 0$.
    
    By Proposition~\ref{pro:absnum}, we get $N_{\mathbf{v}}(p_X)\le d\#Q_L$.  
    Using Remark~\ref{rem:nupiu}, we obtain
                $$N_{\mathbf{v}}(c)\le\pi_{\mathbf{v}}(c)
                \le\pi_{\mathbf{v}}(p_X)\le(N_{\mathbf{v}}(p_X))^k\le(d\#Q_L)^k.$$
    Let $\alpha(m)$ be
    defined as the largest index $i$ such that
    $\mathbf{v}_i(q_{0,L})<m$.  Notice that since the
    sequence $\mathbf{v}$ is increasing, the map $m\mapsto\alpha(m)$
    is non-decreasing and $\lim_{m\to+\infty}\alpha(m)=+\infty$. Since
    $\gcd(a_k,c)=1$, the sequence $(\mathbf{v}_i(q_{0,L})\mod c)_{i\ge
      0}$ is purely periodic and $N_\mathbf{v}(c)\ge \alpha(c)$.
    Therefore, $\alpha(c)\le(d\# Q_L)^k$ and we can give effectively
    an upper bound on $c$.
    
    Now we can give upper bound on the $v_j$'s. The assumption
    $\lim_{m\to+\infty}N_{\mathbf{v}}(m)=+\infty$ implies that
    $\lim_{v\to+\infty}N_{\mathbf{v}}(p_j^v)=+\infty$ and we have
    exactly the same reasoning as in the proof of
    Theorem~\ref{the:finfin}.
    
    We have shown that if $X$ is ultimately periodic, then its period
    $p_X$ is bounded by a constant that can be effectively estimated.
    Using Proposition~\ref{pro:ans}, its preperiod is bounded by a
    constant which can also be computed effectively.

    Consequently, the sets of admissible periods and preperiods we
    have to check are finite.  Thanks to Proposition~\ref{pro:pa}, one
    has to build an automaton for each ultimately periodic set
    corresponding to a pair of admissible preperiods and periods and
    then compare the accepted language with $\rep_S(X)$.
\end{proof}

\begin{example}
    The abstract numeration system given in Example~\ref{ex:2fib}
    satisfies all the assumptions of the previous theorem. 
\end{example}

Theorem~\ref{the:ans} can be used to decide particular instances of
the HD0L periodicity problem. Let $\Delta$, $\Gamma$ be two finite
alphabets. Consider two morphisms $f:\Delta\to\Gamma^*$ and
$g:\Delta\to\Delta^*$ such that $g$ is prolongeable on a letter $a$.
The question is to decide whether or not the infinite word
$f(g^\omega(a))=w_0w_1w_2\cdots$ is ultimately periodic. Thanks to
\cite{RM}, one can canonically build an abstract numeration system
$S=(L,\Sigma,<)$ and a deterministic finite automaton with output
$\mathcal{M}=(Q,q_0,\Sigma,\delta,\Gamma,\tau)$ where
$\tau:Q\to\Gamma$ is the output function such that
$$\forall n\ge 0,\ w_n=\tau(\delta(q_0,\rep_S(n))).$$
Such a sequence
is said to be an {\em $S$-automatic sequence}. Notice that $(w_n)_{n\ge
  0}$ is ultimately periodic if and only if for all $b\in\Gamma$,
the $S$-recognizable set
$$X_b=\{n\mid w_n=b\}$$
is ultimately periodic. If $f$ and $g$ are
such that the associated numeration system $S$ satisfies the
assumptions of Theorem~\ref{the:ans}, then one can decide whether or
not $X_b$ (and therefore $(w_n)_{n\ge 0}$) is ultimately periodic.

\section*{Acknowledgments}
We warmly thank Jacques Sakarovitch for casting a new light on Juha
Honkala's paper. We also thank Christiane Frougny, J\'er\^ome Leroux
and Narad Rampersad for pointing out some useful references.

\end{document}